\newtheorem{thm}{Theorem}[section]
\newtheorem{lem}[thm]{Lemma}
\newtheorem{prop}[thm]{Proposition}
\theoremstyle{definition}
\theoremstyle{remark}
\newtheorem{ex}[thm]{Example}
\newcommand{\opn}[1]{\operatorname{#1}}
\newcommand{\mc}{\mathcal}
\def\bs{\boldsymbol}
\def\>{\rangle}
\def\<{\langle}
\def\ua{\uparrow}
\def\da{\downarrow}
\def\la{\leftarrow}
\def\ra{\rightarrow}
\def\re{\opn{Re}}
\def\im{\opn{Im}}
\def\tr{\opn{Tr}}
\def\rank{\opn{rank}}
\def\ker{\opn{ker}}
\def\Prob{\opn{Prob}}
\def\spn{\opn{span}}
\def\Xint#1{\mathchoice
  {\XXint\displaystyle\textstyle{#1}}%
  {\XXint\textstyle\scriptstyle{#1}}%
  {\XXint\scriptstyle\scriptscriptstyle{#1}}%
  {\XXint\scriptscriptstyle\scriptscriptstyle{#1}}%
  \!\int}
  \def\XXint#1#2#3{{\setbox0=\hbox{$#1{#2#3}{\int}$}
      \vcenter{\hbox{$#2#3$}}\kern-.47\wd0}} 
  \def\dashint{\Xint-}
\begin{document}


\title[QUANTUM SUBSPACE RECURRENCE AND SCHUR FUNCTIONS]{\bf Quantum recurrence of a subspace and operator-valued Schur functions}

\author{J. Bourgain, F.A. Gr\"{u}nbaum, L. Vel\'{a}zquez. J. Wilkening}

\address[J. Bourgain]{School of Mathematics \\ Institute for Advanced Study \\ Princeton \\ NJ \\ 08540}
\email{bourgain@math.ias.edu}

\address[F. A. Gr\"{u}nbaum, J. Wilkening]{Department of Mathematics \\ University of California \\ Berkeley \\ CA \\ 94720}
\email[F. A. Gr\"{u}nbaum]{grunbaum@math.berkeley.edu} \email[J. Wilkening]{wilken@math.berkeley.edu}

\address[L. Vel\'{a}zquez]{Departamento de Matem\'{a}tica Aplicada \\ Universidad de Zaragoza $\&$ IUMA \\ Zaragoza \\ Spain}
\email{velazque@unizar.es}

\subjclass[2000]{}

\keywords{Quantum dynamical systems, quantum walks, recurrence, matrix Schur functions, degree of a function, Aharonov-Anandan phase, matrix measures
and orthogonal polynomials on the unit circle}

\date{}

\maketitle

\begin{abstract}

A notion of monitored recurrence for discrete-time quantum processes was recently introduced in \cite{GVWW} taking the initial state as an absorbing
one. We extend this notion of monitored recurrence to absorbing subspaces of arbitrary finite dimension.

The generating function approach leads to a connection with the well-known theory of operator-valued Schur functions. This is the cornerstone of a
spectral characterization of subspace recurrence that generalizes some of the main results in \cite{GVWW}. The spectral decomposition of the unitary
step operator driving the evolution yields a spectral measure, which we project onto the subspace to obtain a new spectral measure that is purely
singular iff the subspace is recurrent, and consists of a pure point spectrum with a finite number of masses precisely when all states in the
subspace have a finite expected return time.

This notion of subspace recurrence also links the concept of expected return time to an Aharonov-Anandan phase that, in contrast to the case of state
recurrence, can be non-integer. Even more surprising is the fact that averaging such geometrical phases over the absorbing subspace yields an integer
with a topological meaning, so that the averaged expected return time is always a rational number. Moreover, state recurrence can occasionally give
higher return probabilities than subspace recurrence, a fact that reveals once more the counterintuitive behavior of quantum systems.

All these phenomena are illustrated with explicit examples, including as a natural application the analysis of site recurrence for coined walks.

\end{abstract}

\section{Introduction} \label{sec:INT}

The study of return properties of discrete-time random processes goes back at least to G.~P\'{o}lya in 1921 \cite{POLYA}, who proved that only in
dimension one and two is the simplest unbiased random walk recurrent, returning to the starting point with probability one. Since then, the notion of
recurrence has been key in the theory of classical random processes.

Concerning the analog of recurrence for discrete-time quantum systems, two different approaches have been recently proposed. In 2008,
M.~\v{S}tefa\v{n}\'{a}k, I.~Jex and T.~Kiss \cite{SJK1,SJK2,SJK3} resort to an ensemble of identically prepared independent systems that are discarded
after the measurement to avoid changing the dynamics due to the collapse. A more recent proposal, from A.H.~Werner, R.F.~Werner and two of the
authors \cite{GVWW}, treats the perturbation of the evolution due to the monitoring as an essential ingredient in the definition of quantum
recurrence. This second definition of recurrence is based on performing a projective measurement after each step of the quantum evolution, an
idea that has some precursors in the context of quantum walks with absorbing boundary conditions \cite{ABNVW,BCGJW}.

The ideas developed in \cite{GVWW} deal with the probability of returning to the initial state, which corresponds to having an absorbing
one-dimensional subspace generated by this initial state. A natural generalization, which is the aim of this paper, is to consider an absorbing
subspace of arbitrary finite dimension which includes the initial state. This is the case, for instance, when we are interested in the return of a
coined walk to the initial site, regardless of the internal state. This generalization also allows us to ask about the probability of returning to a
finite set of sites from which an initial state was chosen. We will use the terminology ``state recurrence'' to refer to the notion of recurrence
defined in \cite{GVWW}, and ``subspace recurrence'' for the extension of this concept developed in the present paper.

The spectral characterization of subspace recurrence is the principal objective of this paper, which emerges from the identification of a key object
that best codifies subspace recurrence, namely the
operator-valued Schur
function of the spectral measure of the absorbing subspace. This connects the subject of quantum recurrence with well-known areas of complex
analysis: the theory of matrix Schur functions \cite{DFK,BD} and the theory of orthogonal polynomials on the unit circle \cite{SIMON1,SIMON2}. See
also \cite{SIMONmatrix} for a more detailed discussion of the matrix version of these orthogonal polynomials and its relation with matrix Schur
functions.

It turns out that the recurrent subspaces -- those whose states all return with probability one to the subspace -- are characterized by inner Schur
functions. This identifies the recurrent subspaces as those contained in the singular subspace of the unitary step operator driving the evolution.
Furthermore, the recurrent subspaces whose states all possess a finite expected return time are those whose Schur functions are rational inner. Such
subspaces are those contained in finite sums of eigenspaces of the unitary step. These results are the natural extensions of Theorems 1 and 2 in
\cite{GVWW}.

We stress that the aim of the paper is not limited to the generalization to subspace recurrence of the main results already known for state
recurrence. The paper also shows important differences and unexpected relations between subspace and state recurrence. For instance, in contrast to
state recurrence, the existence of states that return with probability one to a subspace of finite dimension greater than one does not require the
existence of a singular subspace for the unitary step. This is true even if the state returns to the subspace in a finite expected time.

Besides, in the case of state recurrence, the expected return time is always infinite or an integer. This is because the expected return time
has a topological interpretation as the degree of the related inner Schur function \cite{GVWW}. These results are no longer true for subspace
recurrence because the expected return time of a state is related to a geometrical (rather than a topological) invariant in this case,
namely the Aharonov-Anandan geometrical phase of the curve obtained by applying the (operator) boundary values of the Schur function
on the unit circle to the state. As a consequence, the expected return time to a subspace can be any real number not smaller than one.

However, a remarkable thing occurs when averaging the expected return time over the absorbing subspace: such an average is always infinite or a
rational number. Actually, the appearance of integer quantities in the finite case has again a topological explanation:
the numerator of the rational average is the degree of the determinant of the corresponding Schur function,
while the denominator is simply the dimension of the subspace. This integer numerator may also be expressed in terms of
invariant subspaces for the unitary step (as the dimension of the minimal invariant subspace containing the absorbing subspace), and in terms of the
spectral measure of the subspace (as the sum of the ranks of all the mass points).

We should also highlight a quantum paradox that arises from the comparison between state and subspace recurrence: the probability of returning to a
state can be bigger than to a subspace containing such a state. Indeed, a state can return to itself in a finite expected time without returning with
probability one to the subspace (see Example~\ref{ex:SHIFT+EIG-Schur}). Thus, in contrast
to the classical expectation, the return probability can decrease when the dimension of the absorbing subspace increases.
Examples of this for one- and two-dimensional coined quantum walks will be given in Section~\ref{sec:SITE} (see Figs.~\ref{fig:HL-L}--\ref{fig:HEXA}).

The use of Schur function methods to the study of quantum recurrence is not limited to the spectral characterization of recurrence. For instance, the
return probability to a subspace, as well as the expected return time, can be computed using the Schur function of the corresponding spectral measure
(see Section~\ref{sec:SITE}). We find that many techniques and results borrowed from the theory of orthogonal polynomials on the unit circle
are invaluable tools for computing this Schur function (see Appendix~\ref{app:SITE-SCHUR}).

Regarding the structure of the paper, Section~\ref{sec:SUB-REC} develops the notion of subspace recurrence, including its spectral characterization.
The expected time for the return of a state to a subspace is discussed in Section~\ref{sec:EXP-TIME}, which ends with a spectral characterization of
the subspaces whose states all return in a finite expected time. Section~\ref{sec:SITE} deals with applications of the previous results to the study
of site recurrence for coined quantum walks. Finally, Appendix~\ref{app:TAU-AVERAGE} includes the proofs of some technical results of interest for
Section~\ref{sec:EXP-TIME}, while Appendix~\ref{app:SITE-SCHUR} gives the details of the calculation of the matrix Schur function for a site in a
one-dimensional coined walk, which is used in the applications of Section~\ref{sec:SITE}.

\section{Subspace recurrence} \label{sec:SUB-REC}

We will consider a quantum discrete-time evolution driven by a unitary step operator $U$ on a complex separable Hilbert state space $\mathcal{H}$
with inner product $\langle\cdot|\cdot\rangle$ and norm $\|\cdot\|$. The rays of $\mc{H}$, which are represented up to phase factors by vectors on
the unit sphere $S_\mathcal{H}=\{\psi\in\mathcal{H} : \|\psi\|=1\}$, are the quantum states that evolve after $n$ steps via $\psi\to U^n\psi$.

Given a subspace $V$ of $\mathcal{H}$ and a state $\psi \in S_V=\{\psi\in V : \|\psi\|=1\}$, we can ask about the probability of returning to the
subspace $V$ if the evolution starts at the state $\psi$. Following the ideas developed in \cite{GVWW}, a monitored procedure to define such a
probability consists in introducing measurements to check the return of the system after every step.

More precisely, starting at $\psi=\psi_0$, any step $\psi_n \to U\psi_n$ is followed by a measurement of the orthogonal projection $P=P_V$ onto $V$.
When this projective measurement succeeds, the experiment ends.  Otherwise the system is projected to $V^\bot$ so that it is left in the
state $\psi_{n+1} \in S_\mathcal{H}$ obtained by normalizing $(I-P)U\psi_n$.

If the system was not found in the subspace $V$ during the first $n$ steps, after the $n$-th step it will be in a state proportional to
$\tilde{U}^n\psi$ with $\tilde{U}=(I-P)U$. According to the standard rules of Quantum Mechanics, the probability of this survival is
$s_n(\psi)=\|\tilde{U}^n\psi\|^2$, and we will call it the {\it $n$-step $V$-survival probability} of $\psi$.

It follows that
the probability that the system returns to the subspace $V$ for the first time in the $n$-th step comes from multiplying the
$n\!-\!1$-step survival probability $\|\tilde{U}^{n-1}\psi\|^2$ by the probability $\|PU\psi_{n-1}\|^2$ to find the system in the subspace $V$
conditioned on such a survival. Therefore, the probability we are searching for is given by $\|PU\tilde{U}^{n-1}\psi\|^2$, and we will refer to it as
the {\it $n$-step first $V$-return probability} of $\psi$.

We emphasize
the difference between the first $V$-return probabilities and the quantities $\|PU^n\psi\|^2$, which also
represent the probability of finding
the system in the subspace $V$, but performing the projective measurement only in the $n$-th step, i.e. without any intermediate monitoring
of the process. To distinguish both kind of probabilities we will refer to
$\|PU^n\psi\|^2$ as
the {\it $n$-step $V$-return probability} of $\psi$.

First $V$-return probabilities and $V$-return probabilities can be written as $\|\bs{a}_n\psi\|^2$ and $\|\bs\mu_n\psi\|^2$ respectively, where
$\bs{a}_n$ and $\bs\mu_n$ are the operators on $V$ given by
\begin{equation}\label{eq:an:def}
\begin{aligned}
 & \bs{a}_n = PU\tilde{U}^{n-1}P & & \quad \text{\it $n$-step first $V$-return amplitude operator},
 \\
 & \bs\mu_n = PU^nP & & \quad \text{\it $n$-step $V$-return amplitude operator}.
\end{aligned}
\end{equation}
We will use boldface characters to denote those operators that, like $\bs{a}_n$ and $\bs\mu_n$, should be considered as operators on $V$. Note that
$\bs{a}_n$ and $\bs\mu_n$, apart from $n$, depend (only) on the unitary step $U$ and the subspace $V$. If it is convenient to indicate explicitly
this dependence for any of the mathematical objects that arise, we will use a superscript and a subscript, writing for instance $\bs{a}_{n,V}^U$ and
$\bs\mu_{n,V}^U$.

The matrix elements of the above operators also provide transition amplitudes of interest. For any states $\psi,\phi \in S_V$, the squared modulus of
the amplitude $\<\phi|\bs{a}_n\psi\>$ gives the probability of the $n$-step transition $\psi\to\phi$ without hitting $V$ at intermediate steps, while
the squared modulus of the amplitude $\<\phi|\bs\mu_n\psi\>$ yields the corresponding $n$-step transition probability without intermediate
monitoring.

To avoid overcounting probabilities, the total probability of a transition -- the probability that it will take place eventually, without any
reference to when it takes place -- should be defined as the sum over the number of steps $n$ of the $n$-step first time transition probabilities.
This leads to the following total probabilities for any states $\psi,\phi\in S_V$,
\begin{equation} \label{eq:R:def}
\begin{aligned}
 & \kern-7pt R(\psi)=\sum_{n\ge1}\|\bs{a}_n\psi\|^2 & & \text{{\it \; $V$-return probability} of $\psi$},
 \\
 & \kern-9pt \Prob(\psi,\phi)=\sum_{n\ge1}|\<\phi|\bs{a}_n\psi\>|^2 & & \text{\it \; $\psi\to\phi$ probability when returning to $V$}.
\end{aligned}
\end{equation}
The second quantity measures the relative likelihood of landing on one state in $V$ compared to another. Note that $R(\psi)=\sum_k\Prob(\psi,\phi_k)$
for any orthonormal basis $\phi_k$ of $V$.

Quantum physicists, convinced by their expertise on the consistency of Born's probabilistic interpretation of Quantum Mechanics, would say that the
probabilistic nature of the above quantities ensures that $\sum_{n\ge1}\|\bs{a}_n\psi\|^2$ and $\sum_{n\ge1}|\<\phi|\bs{a}_n\psi\>|^2$ are not
greater than one. Nevertheless, this can be easily proved, showing that the probabilistic interpretation of the above series does not lead to a
contradiction.

The orthogonal decomposition $U\tilde{U}^{n-1}P = \bs{a}_n + \tilde{U}^nP$ yields
\begin{equation} \label{eq:an-sn}
 \|\bs{a}_n\psi\|^2 = \|\tilde{U}^{n-1}\psi\|^2 - \|\tilde{U}^n\psi\|^2, \qquad \psi\in V,
\end{equation}
thus the $V$-return probability of a state $\psi\in S_V$ is given by
\[
 R(\psi) = 1 - \lim_{n\to\infty}\|\tilde{U}^n\psi\|^2.
\]
This identity not only proves that $R(\psi)\leq1$, but also that $s_n(\psi)=\|\tilde{U}^n\psi\|^2$ converges to a number
$s(\psi)=\lim_{n\to\infty}\|\tilde{U}^n\psi\|^2 \leq 1$ that provides the total {\it $V$-survival probability} of $\psi$, i.e. the probability that
the state $\psi$ is never recaptured by the subspace $V$. The convergence of the $n$-step survival probabilities $s_n(\psi)$ to a number not greater
than one can also be proved directly by checking that $\|\tilde{U}^n\psi\|$ is non-increasing and bounded by one. Finally, the fact that
$|\<\phi|\bs{a}_n\psi\>| \le \|\bs{a}_n\psi\|$ for any $\phi \in S_V$ yields $\Prob(\psi,\phi) \le 1$.

%
%
%
%

\begin{ex} \label{ex:SHIFT+EIG}
Let $\mathcal{H}=\mathbb{C}^2\oplus\ell^2(\mathbb{Z})$ with an orthonormal basis whose states we denote by \mbox{$|\kern-3pt\ua\>$},
\mbox{$|\kern-3pt\da\>$} and $|x\>$, $x\in\mathbb{Z}$. We define a unitary operator $U$ on $\mathcal{H}$ as the shift $U|x\>=|x+1\>$ on the subspace
spanned by $|x\>$, $x\in\mathbb{Z}$, together with \mbox{$U|\kern-3pt\ua\>=|\kern-3pt\da\>$}, \mbox{$U|\kern-3pt\da\>=|\kern-3pt\ua\>$}.

Consider the subspace $V=\spn\{\psi,\phi\}$, where
\[
 \psi=|\kern-3pt\ua\>, \qquad \phi=\frac{1}{\sqrt{2}}(|\kern-3pt\da\>+|0\>).
\]
Then, \mbox{$\tilde{U}\psi=\frac{1}{2}(|\kern-3pt\da\>-|0\>)$} and \mbox{$\tilde{U}^n\psi=-\frac{1}{2}|n-1\>$} for $n\ge2$, while $\tilde{U}^n\phi=$
$\frac{1}{\sqrt{2}}|n\>$ for $n\ge1$. Therefore,
\[
 \begin{cases} \bs{a}_1\psi=\frac{1}{\sqrt{2}}\phi, \\ \bs{a}_1\phi=\frac{1}{\sqrt{2}}\psi, \end{cases}
 \quad
 \begin{cases} \bs{a}_2\psi=\frac{1}{2}\psi, \\ \bs{a}_2\phi=0, \end{cases}
 \quad \bs{a}_n=\bs0 \; \text{ for } \; n\ge3.
\]
For any $\psi_{\alpha,\beta}=\alpha\psi+\beta\phi\in S_V$ we find that $R(\psi_{\alpha,\beta})=\frac{1}{2}+\frac{1}{4}|\alpha|^2$, which reaches its
maximum 3/4 and minimum 1/2 at $\psi$ and $\phi$, respectively. The transition probabilities when returning to $V$ between these two extreme states
are $\Prob(\psi,\phi)=\Prob(\phi,\psi)=1/2$. \hfill $\scriptstyle\blacksquare$
\end{ex}

\subsection{Generating functions} \label{ssec:GEN-FUN}

The inequalities $\|\bs\mu_n\|,\|\bs{a}_n\|\le1$ ensure that the series
\[
 \hat{\bs\mu}(z)=\sum_{n\ge0}\bs\mu_nz^n, \qquad \hat{\bs{a}}(z)=\sum_{n\ge1}\bs{a}_nz^n, \qquad |z|<1,
\]
are absolutely convergent and hence define operator-valued functions on the unit disk. We will refer to $\hat{\bs\mu}(z)$ and $\hat{\bs{a}}(z)$ as
the $V$-return and first $V$-return generating functions, respectively.

These generating functions are related to the Green functions of the operators $U$ and $\tilde{U}$,
\[
 G(z)=(I-zU)^{-1}, \qquad \tilde{G}(z)=(I-z\tilde{U})^{-1},
\]
through the relations
\[
 \hat{\bs\mu}(z)=PG(z)P, \qquad \hat{\bs{a}}(z)=zPU\tilde{G}(z)P.
\]

The above relations, together with the resolvent identities
\[
 G(z)-\tilde{G}(z)=zG(z)PU\tilde{G}(z)=z\tilde{G}(z)PUG(z),
\]
and the equality $P\tilde{G}(z)P=P$, which follows from $\tilde{G}(z)-I=z\tilde{U}\tilde{G}(z)$, lead to
\[
 \hat{\bs\mu}(z)-P = \hat{\bs\mu}(z)\hat{\bs{a}}(z) = zPUG(z)P.
\]
Expanding $G(z)P=\tilde{G}(z)(P+zPUG(z)P)=\tilde{G}(z)P\hat{\bs\mu}(z)$ in the last term yields the renewal equations
\[
  \hat{\bs\mu}(z)\hat{\bs{a}}(z) = \hat{\bs{a}}(z)\hat{\bs\mu}(z) = \hat{\bs\mu}(z)-P,
\]
which are the extension to $\dim V>1$ of the renewal equations already found in \cite{GVWW} for $\dim V=1$. In terms of the Taylor
coefficients, these equations become
\[
 \bs{\mu}_n = \bs a_n + \sum_{k=1}^{n-1}\bs\mu_k\bs{a}_{n-k} = \bs a_n + \sum_{k=1}^{n-1}\bs{a}_{n-k}\bs\mu_k.
\]
In the classical case, an identical formula expresses the probability of returning in $n$ steps in terms of all the ways it can return for the first
time in $n$ or fewer steps. However, in the quantum case, $\bs a_n$ and $\bs\mu_n$ are (operator) \emph{amplitudes} rather than probabilities.
Nevertheless, both renewal equations hold despite the fact that the operator coefficients $\bs\mu_n$ and $\bs{a}_n$ do not commute in general. In any
case, they allow us to compute the first $V$-return amplitudes $\bs{a}_n$ from the $V$-return amplitudes $\bs\mu_n$, which are usually more
accesible.

Since we consider the coefficients $\bs\mu_n,\bs{a}_n$ as operators on $V$, the generating functions $\hat{\bs\mu}(z),\hat{\bs{a}}(z)$ become
functions with values in operators on $V$. Thus the term $P$ in the renewal equations must be understood as the identity $I_V$ in $V$, leading to
\[
 (I_V-\hat{\bs{a}}(z))\hat{\bs\mu}(z) = I_V.
\]
This shows that $\hat{\bs\mu}(z)$ and $I_V-\hat{\bs{a}}(z)$ are invertible on the unit disk with
\[
 \hat{\bs\mu}(z)=(I_V-\hat{\bs{a}}(z))^{-1}, \qquad \hat{\bs{a}}(z) = I_V-\hat{\bs\mu}(z)^{-1}.
\]
These relations permit us to express the $V$-return amplitudes purely in terms of the first $V$-return amplitudes and vice-versa,
\[
\begin{aligned}
 \bs\mu_n & = \bs{a}_n
 + \kern-3pt \sum_{\substack{n_1,n_2\ge1\\n_1+n_2=n}} \kern-7pt \bs{a}_{n_1} \bs{a}_{n_2}
 + \kern-7pt \sum_{\substack{n_1,n_2,n_3\ge1\\n_1+n_2+n_3=n}} \kern-15pt \bs{a}_{n_1} \bs{a}_{n_2} \bs{a}_{n_3}
 + \cdots + \bs{a}_1^n,
 \\
 \bs{a}_n & = \bs\mu_n
 - \kern-3pt \sum_{\substack{n_1,n_2\ge1\\n_1+n_2=n}} \kern-7pt \bs\mu_{n_1} \bs\mu_{n_2}
 + \kern-7pt \sum_{\substack{n_1,n_2,n_3\ge1\\n_1+n_2+n_3=n}} \kern-15pt \bs\mu_{n_1} \bs\mu_{n_2} \bs\mu_{n_3}
 - \cdots + (-1)^{n-1} \bs\mu_1^n.
\end{aligned}
\]

We can arrive at a spectral interpretation of the above generating functions from the spectral decomposition of the unitary operator $U$,
\[
 U = \int_{S^1} \lambda\,E(d\lambda), \qquad \int_{S^1} E(d\lambda) = I,
\]
which expresses $U$ as an integral over the unit circle $S^1=\{z\in\mathbb{C}:|z|=1\}$ with respect to the projector-valued spectral measure
$E(d\lambda)$. This allows us to rewrite its Green function as
\[
 G(z) = \int_{S^1} \frac{E(d\lambda)}{1-\lambda z}.
\]
The relation between $G(z)$ and $\hat{\bs\mu}(z)$ gives
\begin{equation} \label{eq:STIELTJES}
 \hat{\bs\mu}(z) = \int_{S^1} \frac{\bs\mu(d\lambda)}{1-\lambda z}, \qquad \bs\mu(d\lambda)=PE(d\lambda)P,
\end{equation}
where $\bs\mu(d\lambda)$, which must be considered as a measure with values in the space of
operators on $V$, will be called the {\it spectral measure of the
subspace $V$}. The measure $\bs\mu(d\lambda)$ is supported on a subset of the unit circle and, since $E(d\lambda)$ is a resolution of the identity,
$\int_{S^1}\bs\mu(d\lambda)=I_V$. For convenience, we will omit the domain of integration with respect to $\bs\mu(d\lambda)$ because it will
always be
$S^1$.

The equality \eqref{eq:STIELTJES} identifies $\hat{\bs\mu}(z)$ as the operator-valued Stieltjes function of the measure $\bs\mu(d\lambda)$. When $V$
is of finite dimension, identifying an operator on $V$ with a finite matrix allows us to use the results of the theory of matrix-valued Carath\'{e}odory
and Schur functions \cite{DFK,BD}. In what follows we will assume that $\dim V < \infty$.

With this assumption, the operator-valued Carath\'{e}odory and Schur functions of the measure $\bs\mu(d\lambda)$ are the analytic functions on the unit
disk defined by
\begin{equation}\label{eq:herglotz}
 \bs{F}(z) = \int \frac{\lambda+z}{\lambda-z}\,\bs\mu(d\lambda), \quad \bs{f}(z)=\frac{1}{z}(\bs{F}(z)-I_V)(\bs{F}(z)+I_V)^{-1}, \quad |z|<1.
\end{equation}
We will also refer to $\bs{F}(z)$ and $\bs{f}(z)$ as the Carath\'{e}odory and Schur functions of the subspace $V$. Evidently, $\bs{F}(0)=I_V$,
$\re\bs{F}(z) :=\frac{1}{2}(\bs{F}(z)+\bs{F}(z)^\dag)>0$ and $\bs{F}^\dag(z) = 2\hat{\bs\mu}(z)-I_V$, where $\bs{F}^\dag(z)=\bs{F}(\bar z)^\dag$ is
the analytic function on the unit disk whose Taylor coefficients are the adjoints of those of $\bs{F}(z)$.

The relation between Carath\'{e}odory and Schur functions gives
\begin{equation} \label{eq:reF-|f|}
\begin{aligned}
  & (I_V-z\bs{f}(z))(\bs{F}(z)+I_V) = 2I_V, \\
 & \re\bs{F}(z) = (I_V-\bar{z}\bs{f}(z)^\dag)^{-1} (I_V - |z|^2 \bs{f}(z)^\dag \bs{f}(z)) (I_V-z\bs{f}(z))^{-1},
 \\
 & \kern2pt I_V - |z|^2 \bs{f}(z)^\dag \bs{f}(z) = 4 (\bs{F}(z)^\dag+I_V)^{-1} \re\bs{F}(z) (\bs{F}(z)+I_V)^{-1} \ge 0,
\end{aligned}
\end{equation}
which shows that $\bs{f}(z)^\dag \bs{f}(z) \le I_V$ due to Schwarz's lemma.  This inequality is equivalent to $\bs{f}(z) \bs{f}(z)^\dag \le I_V$, and
can be also written as $\|\bs{f}(z)\| \le 1$.  If $\bs{f}(z)$ is not constant, the maximum principle implies that these inequalities are strict
inside the unit disk. Constant Schur functions appear for instance when $V$ is $U$-invariant (see below).

From (\ref{eq:reF-|f|}), we obtain $\hat{\bs\mu}(z) = (I_V-z\bs{f}^\dag(z))^{-1}$. Thus,
the renewal equation can be rewritten as
\[
 \hat{\bs{a}}(z) = z\bs{f}^\dag(z).
\]
This means that {\it the first $V$-return generating function is the Schur function of $V$, up to conjugation and multiplication by $z$}. In other
words, the first $V$-return amplitudes $\bs{a}_n$ are the adjoints of the Taylor coefficients for the Schur function $\bs{f}(z)$ of the subspace
$V$. In the case $\dim V=1$, this relation between the first $V$-return generating function and the Schur function of $V$ was reported for the
first time in \cite{GVWW}.

When $V$ is $U$-invariant, $P$ commutes with $U$ in (\ref{eq:an:def}) and $\hat{\bs{a}}(z)=zPUP$.
This identity reflects the obvious fact that any state of $V$ returns to $V$ in one step, and
$\bs{a}_1=U\upharpoonright V$ is the restriction of $U$ to $V$.  We also see that
$\bs{f}(z)=PU^\dag P$ is constant in this case.

The boundary behavior of $\bs{F}(z)$ as $|z|\to1$ provides relevant information about the measure: if
$\bs\mu(d\lambda)=\bs{w}(\theta)\frac{d\theta}{2\pi}+\bs\mu_{\text{sing}}(d\lambda)$, $\lambda=e^{i\theta}$, is its Lebesgue decomposition, then
$\bs{w}(\theta)=\lim_{r\to1^-}\re\bs{F}(re^{i\theta})$ for a.e.~$\theta$; the singular part $\bs\mu_{\text{sing}}(d\lambda)$ is concentrated on the
points $z\in S^1$ such that $\lim_{r\to1^-}\tr\re\bs{F}(rz)=\infty$; and the mass points $z\in S^1$ are characterized by a non-zero value of
$\bs\mu(\{z\})=\lim_{r\to1^-}\frac{1-r}{2}\bs{F}(rz)$. These results follow from the analogous ones for the scalar case using the polarization
identity for an operator $\bs{T}$ on a complex Hilbert space,
$\<\phi|\bs{T}\psi\>=\frac{1}{4}\sum_{k=0}^3\frac{1}{i^k}\<\phi+i\psi|\bs{T}(\phi+i\psi)\>$ ; see \cite{SIMONmatrix}.

Like Carath\'eodory functions, Schur functions can be defined a.e.~on the unit circle by their radial limits,
$\bs{f}(e^{i\theta})=\lim_{r\to1^-}\bs{f}(re^{i\theta})$. The following proposition gives a central result for the analysis of quantum
recurrence.

\begin{prop}
The measure $\bs\mu(d\lambda)$ is purely singular exactly when its Schur function $\bs{f}(z)$ is inner, i.e. when $\bs{f}(e^{i\theta})$ is
unitary for a.e.~$\theta$. In particular, $\bs\mu(d\lambda)$ is finitely supported -- it is a sum of finitely many mass points -- if and only if
$\bs{f}(z)$ is a rational inner function.
\end{prop}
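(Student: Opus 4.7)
The strategy is to take radial limits in the two mutually inverse identities (\ref{eq:reF-|f|}),
\[
 \re\bs{F}(z) = M(z)^\dag\bigl(I_V - |z|^2\bs f(z)^\dag\bs f(z)\bigr) M(z),
\]
\[
 I_V - |z|^2\bs f(z)^\dag\bs f(z) = 4(\bs F(z)^\dag+I_V)^{-1}\re\bs F(z)(\bs F(z)+I_V)^{-1},
\]
where $M(z)=(I_V-z\bs f(z))^{-1}=\tfrac{1}{2}(\bs F(z)+I_V)$. By the boundary characterizations stated just before the proposition, $\bs\mu$ is purely singular iff $\bs w(\theta):=\lim_{r\to 1^-}\re\bs F(re^{i\theta})$ vanishes a.e., while $\bs f$ is inner iff $I_V-\bs f(e^{i\theta})^\dag\bs f(e^{i\theta})=0$ a.e. The proof of the first assertion thus reduces to showing these two a.e.\ boundary conditions are equivalent.

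I would first collect the boundary inputs: entrywise Fatou applied to the bounded analytic matrix $\bs f(z)$ gives radial limits $\bs f(e^{i\theta})$ a.e.; the set $\{\theta:\tr\re\bs F(re^{i\theta})\to\infty\}$ lies in the topological support of the scalar singular measure $\tr\bs\mu_{\text{sing}}$ and thus has Lebesgue measure zero, so $\bs F(e^{i\theta})$ is finite a.e.; and on the open disk $\re(\bs F(z)+I_V)\ge I_V$ yields the a priori bound $\|(\bs F(z)+I_V)^{-1}\|\le 1$, a bound that persists in a.e.\ radial boundary limits.

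For the forward direction, assume $\bs f$ is inner. In the first identity at $z=re^{i\theta}$, the outer factor $M(re^{i\theta})$ has a finite limit for a.e.\ $\theta$, while the middle factor tends to $I_V-\bs f(e^{i\theta})^\dag\bs f(e^{i\theta})=0$ a.e.; hence $\bs w(\theta)=0$ a.e. Conversely, if $\bs\mu$ is purely singular then $\bs w\equiv 0$ a.e., and the second identity sandwiches $I_V-r^2\bs f^\dag\bs f$ between factors of norm $\le 1$; passing to the limit forces $I_V-\bs f(e^{i\theta})^\dag\bs f(e^{i\theta})=0$ a.e., i.e.\ $\bs f(e^{i\theta})$ is an isometry and therefore unitary since $V$ is finite-dimensional.

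For the refinement: a finite sum $\bs\mu=\sum_j\bs\mu_j\delta_{\lambda_j}$ gives the rational Carath\'eodory function $\bs F(z)=\sum_j\frac{\lambda_j+z}{\lambda_j-z}\bs\mu_j$, making $\bs f=\frac{1}{z}(\bs F-I_V)(\bs F+I_V)^{-1}$ rational; the first part then upgrades rational to rational inner. Conversely, if $\bs f$ is rational inner then $\bs F=(I_V-z\bs f)^{-1}(I_V+z\bs f)$ is rational, so $\tr\re\bs F$ is a rational scalar function with only finitely many singularities on $S^1$; the singular-support criterion quoted before the proposition localizes $\bs\mu_{\text{sing}}$ to this finite set, and pure singularity (from the first part) identifies $\bs\mu$ with that finite sum of mass points. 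The one real obstacle throughout is ensuring the outer sandwich factors remain controlled in the radial limit, which is precisely why the uniform bound $\|(\bs F+I_V)^{-1}\|\le 1$---obtained for free from $\re\bs F\ge 0$---is the key technical input.
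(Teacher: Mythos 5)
Your proof is correct and follows essentially the same route as the paper: both directions of the first claim come from taking radial boundary values in the two identities of (\ref{eq:reF-|f|}) (with the sandwich factors controlled exactly as you describe), and the second claim from the rationality of $\bs F$ versus $\bs f$ together with the localization of $\bs\mu_{\text{sing}}$ at the finitely many poles of $\bs F$ on $S^1$. One small quibble: justifying that $\{\theta:\tr\re\bs F(re^{i\theta})\to\infty\}$ is Lebesgue-null via the \emph{topological} support of $\tr\bs\mu_{\text{sing}}$ is not quite right (a singular measure can have full topological support); the correct, standard reason is that scalar Carath\'eodory functions have finite radial limits a.e., which transfers to the matrix entries by polarization.
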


\begin{proof}
Taking boundary values in \eqref{eq:reF-|f|} we find that $\re\bs{F}(e^{i\theta})=0$ is equivalent to
$\bs{f}(e^{i\theta})^\dag\bs{f}(e^{i\theta})=I_V$. Thus, up to a set of zero Lebesgue measure, $\bs{w}(\theta)=0$ if and only if
$\bs{f}(e^{i\theta})$ is unitary. When $\bs\mu(d\lambda)$ is finitely supported, $\bs{F}(z)$ and $\bs{f}(z)$ are obviously rational.
Conversely, a rational Schur function $\bs{f}(z)$ corresponds to a rational Carath\'{e}odory function $\bs{F}(z)$.
Since $\bs{F}(z)$ is rational, $\bs\mu_{\text{sing}}(d\lambda)$ must be
concentrated on the set of finitely many poles of $\bs{F}(z)$ that lie on the unit circle.  Since $\bs{f}(z)$ is inner,
$\bs{w}(\theta)=0$ a.e.~and $\bs\mu(d\lambda)=\bs\mu_\text{sing}(d\lambda)$ is finitely supported.
\end{proof}

\subsection{$V$-return probability and $V$-recurrence} \label{ssec:RET-REC}

The previous result is the key for the use of techniques of complex analysis in the study of quantum recurrence. A first consequence is that the
generating function $\hat{\bs{a}}(z)$ is also a Schur function with radial boundary values
$\hat{\bs{a}}(e^{i\theta})=e^{i\theta}\bs{f}(e^{-i\theta})^\dag$ a.e.~on the unit circle. The transition probabilities related to the subspace $V$ can
be expressed in terms of these boundary values as
\[
\begin{aligned}
 & \Prob(\psi,\phi) = \int_0^{2\pi} |\<\phi|\hat{\bs{a}}(e^{i\theta})\psi\>|^2  \, \frac{d\theta}{2\pi}
 = \int_0^{2\pi} |\<\phi|\bs{f}(e^{i\theta})^\dag\psi\>|^2 \, \frac{d\theta}{2\pi},
 \\
 & R(\psi) = \int_0^{2\pi} \|\hat{\bs{a}}(e^{i\theta})\psi\|^2 \, \frac{d\theta}{2\pi} = \<\psi|\bs{R}\psi\>,
 \quad \bs{R} \equiv \int_0^{2\pi} \kern-3pt \bs{f}(e^{i\theta}) \, \bs{f}(e^{i\theta})^\dag \, \frac{d\theta}{2\pi},
\end{aligned}
\]
where $R(\psi)$ and $\Prob(\psi,\phi)$ were defined in (\ref{eq:R:def}) above. Therefore, {\it the $V$-return probability
$R(\psi)=\<\psi|\bs{R}\psi\>$ is a quadratic form in $V$ given by the operator $\bs{R}$, which can be understood as the squared $L^2$ `right operator
norm' of the Schur function $\bs{f}(z)$}.

We will refer to $\bs{R}$ as the {\it $V$-return probability operator}, which obviously satisfies $0 \le \bs{R} \le I_V$ in agreement with the fact
that $\<\psi|\bs{R}\psi\>$ must be a probability for any $\psi\in S_V$. As a consequence, when $\psi$ runs over the unit sphere $S_V$, the possible
values of $R(\psi)$ cover the entire interval between the minimum and maximum eigenvalues of $\bs{R}$, and these eigenvalues
are the minimum and maximum $V$-return probabilities.
If these eigenvalues are distinct, i.e.\,$R(\psi)$ is not constant on $S_V$, the corresponding eigenspaces are orthogonal to each
other and the states of $V$ with maximum and minimum $V$-return probabilities are the unit spheres of these two eigenspaces.

The mean value of the $V$-return probability is the unbiased average of $R(\psi)$ over the states of $V$, namely
\[
 \dashint_{S_V} R(\psi) \, d\psi,
\]
with $d\psi$ the constant probability measure on $S_V$.
Averaging the diagonal coefficients $\<\psi|\bs{T}\psi\>$ of any normal operator $\bs{T}$ on $V$ over $S_V$ is
equivalent to averaging over any orthonormal basis of $V$.  Indeed, if $\phi_k$ is an orthonormal eigenvector basis of $\bs{T}$ and $\lambda_k$ are the
corresponding eigenvalues,
\begin{equation} \label{eq:AVERAGE}
 \dashint_{S_V} \<\psi|\bs{T}\psi\> \, d\psi = \sum_k \lambda_k \; \dashint_{S_V} |\<\phi_k|\psi\>|^2 \, d\psi
 = \frac{\sum_k\lambda_k}{\dim V} = \frac{\tr\bs{T}}{\dim V},
\end{equation}
where we have used that
\[
 \dashint_{S_V} |\<\phi|\psi\>|^2 \, d\psi
 = \frac{\|\phi\|^2}{\dim V}, \qquad \forall \phi\in\mathcal{H}.
\]
Applying this to the positive semidefinite operator $\bs{R}$ we find that
\[
 \dashint_{S_V} R(\psi) \, d\psi = \frac{\tr\bs{R}}{\dim V}
 = \frac{1}{\dim V} \int_0^{2\pi} \|\bs{f}(e^{i\theta})\|_F^2 \, \frac{d\theta}{2\pi},
\]
where $\|A\|_F=\tr(A^\dag A)$ is the Frobenius norm. This result states that {\it the mean $V$-return probability equals the arithmetic mean value of
the eigenvalues of the $V$-return probability operator $\bs{R}$, counting multiplicity.} Equivalently, the mean $V$-return probability comes from
averaging over the unit circle the arithmetic mean of the squared singular values of the Schur function $\bs{f}(z)$ of $V$, counting multiplicity.

We say that {\it a state $\psi\in S_V$ is $V$-recurrent if it returns to $V$ with probability one}, i.e. $R(\psi)=1$. In particular, the recurrence
of a state $\psi$, as defined in \cite{GVWW}, corresponds in our context to the special case that $V$ is the subspace spanned by $\psi$. One
of the main results in \cite{GVWW} is that the set of recurrent states is the unit sphere of the singular subspace of $U$. As we will see
below, when $\dim V>1$, the $V$-recurrent states continue to possess a subspace structure (up to normalization), but it need not be the singular
subspace of $U$.

Due to the identity $R(\psi) = \<\psi|\bs{R}\psi\>$ and the inequalities $0 \le \bs{R} \le I_V$, the existence of $V$-recurrent states is equivalent
to the condition $\|\bs{R}\|=1$. This means that 1 is an eigenvalue of $\bs{R}$ and the unit sphere of the corresponding eigenspace is the set of
$V$-recurrent states. Similar arguments show that a state $\psi$ is $V$-recurrent if and only if
$\|\hat{\bs{a}}(e^{i\theta})\psi\|=1$ for a.e.~$\theta$, which means that $\bs{f}(e^{i\theta}) \bs{f}(e^{i\theta})^\dag \psi = \psi$
except on a set of zero Lebesgue measure.  Therefore, the existence of
$V$-recurrent states implies that 1 is a singular value of $\bs{f}(e^{i\theta})$ a.e.  The converse is not true because the
corresponding eigenvectors of $\bs{f}(e^{i\theta}) \bs{f}(e^{i\theta})^\dag$ will generally depend on $\theta$.

If $\bs\mu(d\lambda)=\bs{w}(\theta)\frac{d\theta}{2\pi}+\bs\mu_\text{sing}(d\lambda)$, we know that the weight is given for a.e.
$\lambda=e^{i\theta}$ by $\bs{w}(\theta) = \re\bs{F}(\lambda)$. The existence of $V$-recurrent states implies that $\|\bs{f}(z)\|=1$ a.e.~in $S^1$,
which, in view of \eqref{eq:reF-|f|}, is equivalent to saying that the operator $\bs{w}(\theta)$ is singular for a.e.~$\theta$. Hence, the presence
of $V$-recurrent states requires an a.e.~singular weight for the absolutely continuous part of the spectral measure of $V$. The above comments
suggest that the converse is not true. (This is confirmed in Example~\ref{ex:SHIFT+EIG-Schur} below).

Thinking classically, we expect the $V$-return probability of a state to be non-decreasing when enlarging the subspace $V$. However,
in quantum recurrence, modifying the target subspace $V$ also changes the monitoring of the process. This has the effect of
dramatically altering
not only the possible return paths to $V$, but also the way in which each path contributes to the total $V$-return probability. The
reason for this is that Quantum Mechanics adds complex amplitudes or positive probabilities depending on whether the paths have the same target state
or orthogonal target states. This results in the absence of monotonicity with respect to the subspace $V$ for the $V$-return probability. In
particular, the recurrence of a state $\psi$ is in general not easily connected to its $V$-recurrence for a subspace $V$ that contains $\psi$.

We will see that the above effects already appear in simple models like the one introduced in Example~\ref{ex:SHIFT+EIG}.
We will
also use this example to illustrate the generating function approach to quantum recurrence. Although this approach could be avoided for such a toy
model, the analysis below will clarify in a simple setting some aspects that will reappear in the applications discussed in Section~\ref{sec:SITE},
where Schur functions become an indispensable tool for the analysis of recurrence.

\begin{ex} \label{ex:SHIFT+EIG-Schur}
Let us compute the generating functions for the subspace $V=\spn\{\psi,\phi\}$ of Example~\ref{ex:SHIFT+EIG}. Identifying any operator on
$V$ with its matrix representation in the basis $\{\psi,\phi\}$, and denoting by $I_2$ the $2\times2$ identity matrix, we have the $V$-return amplitudes
\[
 \bs\mu_0 = I_2,
 \quad
 \bs\mu_{2n-1} = \begin{pmatrix} 0 & \frac{1}{\sqrt{2}} \\ \frac{1}{\sqrt{2}} & 0 \end{pmatrix}
 \; \text{and} \;\;
 \bs\mu_{2n} = \begin{pmatrix} 1 & 0 \\ 0 & \frac{1}{2} \end{pmatrix}
 \; \text{for} \;\; n\ge1,
\]
which lead to the generating functions
\[
 \hat{\bs\mu}(z) = \frac{1}{1-z^2} \begin{pmatrix} 1 & \frac{z}{\sqrt{2}} \\ \frac{z}{\sqrt{2}} & 1-\frac{z^2}{2} \end{pmatrix},
 \quad
 \hat{\bs{a}}(z) = I_2-\hat{\bs\mu}(z)^{-1} = \begin{pmatrix} \frac{z^2}{2} & \frac{z}{\sqrt{2}} \\ \frac{z}{\sqrt{2}} & 0 \end{pmatrix}.
\]
This gives the $V$-return probability matrix
\[
 \bs{R} =
 \int_0^{2\pi}
 \begin{pmatrix} \frac{3}{4} & \frac{e^{-i\theta}}{2\sqrt{2}} \\ \frac{e^{i\theta}}{2\sqrt{2}} & \frac{1}{2} \end{pmatrix} \frac{d\theta}{2\pi} =
 \begin{pmatrix} \frac{3}{4} & 0 \\ 0 & \frac{1}{2} \end{pmatrix},
\]
in agreement with the $V$-return probability obtained in Example~\ref{ex:SHIFT+EIG},
\[
 R(\psi_{\alpha,\beta}) =
 \begin{pmatrix} \overline\alpha & \overline\beta  \end{pmatrix} \bs{R} \begin{pmatrix} \alpha \\ \beta \end{pmatrix} =
 \frac{1}{2} + \frac{1}{4}|\alpha|^2,
 \qquad
 \psi_{\alpha,\beta}=\alpha\psi+\beta\phi.
\]
The corresponding Carath\'{e}odory and Schur functions
\[
 \bs{F}(z) = \frac{1}{1-z^2} \begin{pmatrix} 1+z^2 & \sqrt{2}z \\ \sqrt{2}z & 1 \end{pmatrix},
 \qquad
 \bs{f}(z) = \begin{pmatrix} \frac{z}{2} & \frac{1}{\sqrt{2}} \\ \frac{1}{\sqrt{2}} & 0 \end{pmatrix},
\]
yield the spectral measure of $V$
\[
 \bs\mu(d\lambda) =
 \begin{pmatrix} 0 & 0 \\ 0 & \frac{1}{2} \end{pmatrix} \frac{d\theta}{2\pi} +
 \frac{1}{2} \begin{pmatrix} 1 & \frac{1}{\sqrt{2}} \\ \frac{1}{\sqrt{2}} & \frac{1}{2} \end{pmatrix} \delta(\lambda-1) +
 \frac{1}{2} \begin{pmatrix} 1 & \frac{-1}{\sqrt{2}} \\ \frac{-1}{\sqrt{2}} & \frac{1}{2} \end{pmatrix} \delta(\lambda+1),
\]
where $\lambda=e^{i\theta}$. Note that the weight $\bs{w}(\theta)$ is singular despite the absence of $V$-recurrent states.
This occurs because 1 is a singular value of
$\bs{f}(e^{i\theta})$ but the corresponding eigenvector of $\bs{f}(e^{i\theta})\bs{f}(e^{i\theta})^\dag$ is $\theta$-dependent.

\begin{figure}
 \includegraphics[width=8cm]{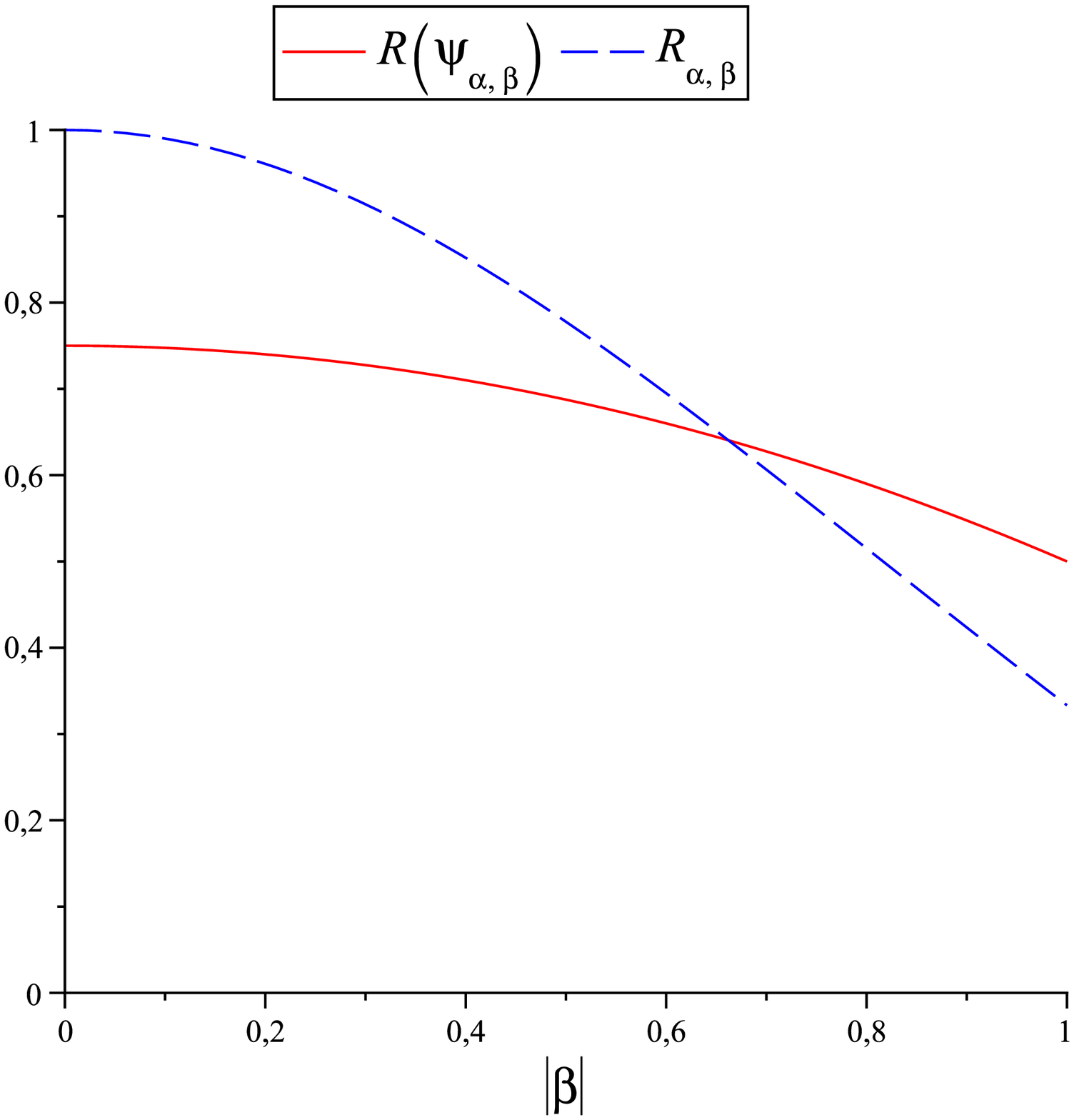}
 \caption{{\bf State versus subspace recurrence in Example~\ref{ex:SHIFT+EIG-Schur}}.
 The return probability for $\psi_{\alpha,\beta}=\alpha\psi+\beta\phi$ as a function of $|\beta|$.
 The return probability $R_{\alpha,\beta}$ to the state (dashed blue line)
 is higher than the return probability $R(\psi_{\alpha,\beta})$ to the subspace $V=\spn\{\psi,\phi\}$ (solid red line)
 when $|\beta|<\sqrt{\frac{1}{2}(5-\sqrt{17})}\approx0.66$.}
 \label{fig:state-sub-REC}
\end{figure}

We can compare state and subspace recurrence by computing the return probability $R_{\alpha,\beta}$ of the state $\psi_{\alpha,\beta}$. The scalar generating functions
\[
\begin{aligned}
 & \hat\mu_{\alpha,\beta}(z) =
 \begin{pmatrix} \overline\alpha & \overline\beta  \end{pmatrix} \hat{\bs\mu}(z) \begin{pmatrix} \alpha \\ \beta \end{pmatrix} =
 \frac{1-\frac{1}{2}|\beta|^2z^2+\sqrt{2}\re(\overline\alpha\beta)z}{1-z^2},
 \\
 & \hat{a}_{\alpha,\beta}(z) = 1-\frac{1}{\hat\mu_{\alpha,\beta}(z)} =
 \frac{\left(1-\frac{1}{2}|\beta|^2\right)z^2+\sqrt{2}\re(\overline\alpha\beta)z}{1-\frac{1}{2}|\beta|^2z^2+\sqrt{2}\re(\overline\alpha\beta)z},
\end{aligned}
\]
allow us to calculate
\[
 R_{\alpha,\beta} = \int_0^{2\pi} |\hat{a}_{\alpha,\beta}(e^{i\theta})|^2 \, \frac{d\theta}{2\pi} =
 \frac{1-\frac{1}{2}|\beta|^2}{1+\frac{1}{2}|\beta|^2},
\]
which should be compared to $R(\psi_{\alpha,\beta})=\frac{3}{4}-\frac{1}{4}|\beta|^2$ (see Figure~\ref{fig:state-sub-REC}). We find that
$R_{\alpha,\beta} > R(\psi_{\alpha,\beta})$ when $|\beta|^2 < \frac{1}{2}(5-\sqrt{17}) \approx 0.44$, which shows that the return probability to a state
can be bigger than the return probability to a subspace containing such a state. \emph{In fact, $\psi \in V$ is a recurrent state while $V$ has no
$V$-recurrent states!}  Furthermore, $\psi$ is not $V$-recurrent despite the fact that the expected time for the return of $\psi$ to itself is finite
and equal to 2 (see Section~\ref{sec:EXP-TIME}). \hfill $\scriptstyle\blacksquare$
\end{ex}

The previous example shows explicitly that a subspace $V$ with recurrent states need not have any $V$-recurrent states. On the other hand, it is easy to
check that the subspace $V=\spn\{|0\>,|1\>\}$ in Example~\ref{ex:SHIFT+EIG} has no recurrent states while $|0\>$ is $V$-recurrent. Thus, a subspace
with $V$-recurrent states need not have any recurrent states. Therefore, state recurrence and subspace recurrence seem to be unrelated notions.
Nevertheless, we will see that, surprisingly, demanding all the states of $V$ to be recurrent or $V$-recurrent are equivalent
requirements.

\begin{thm} \label{thm:REC}
Let $U$ be a unitary step and $V$ a finite-dimensional subspace with spectral measure $\bs\mu(d\lambda)$ and Schur function $\bs{f}(z)$. Then, the
following statements are equivalent:
\begin{enumerate}
\item All the states of $V$ are $V$-recurrent.
\item All the states of $V$ are recurrent.
\item $\bs\mu(d\lambda)$ is purely singular.
\item $\bs{f}(z)$ is inner.
\end{enumerate}
\end{thm}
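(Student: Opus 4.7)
The plan is to establish the four equivalences by chaining to results already in place. Since the previous Proposition gives $(3)\Leftrightarrow(4)$ outright, I only need to handle $(1)\Leftrightarrow(4)$ and $(2)\Leftrightarrow(3)$; in both pairs the decisive ingredient will be the finite-dimensionality of $V$, used to exchange the quantifiers ``for every $\psi$, for a.e.\ $\theta$'' into ``for a.e.\ $\theta$, for every $\psi$.''

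For $(1)\Leftrightarrow(4)$, I would invoke the characterization already derived in the paragraphs preceding the theorem: $\psi\in S_V$ is $V$-recurrent precisely when $\bs{f}(e^{i\theta})\bs{f}(e^{i\theta})^\dag\psi=\psi$ for almost every $\theta$. If $\bs{f}$ is inner, then $\bs{f}(e^{i\theta})\bs{f}(e^{i\theta})^\dag=I_V$ a.e., and (1) is immediate. Conversely, assuming (1), I would fix an orthonormal basis $\phi_1,\ldots,\phi_d$ of $V$, extract for each $\phi_k$ the null set $N_k\subset[0,2\pi)$ off which the fixed-point equation fails, and observe that $N=\bigcup_k N_k$ is still null since the union is finite. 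Outside $N$ the operator $\bs{f}(e^{i\theta})\bs{f}(e^{i\theta})^\dag$ fixes a full basis and hence equals $I_V$, so $\bs{f}(e^{i\theta})$ is unitary a.e., which is the definition of inner.

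For $(2)\Leftrightarrow(3)$, I would rely on the scalar reduction $\mu_\psi(\,\cdot\,)=\<\psi|E(\,\cdot\,)\psi\>=\<\psi|\bs\mu(\,\cdot\,)\psi\>$ for $\psi\in V$, together with the result from \cite{GVWW} recalled in the preceding subsection: a state is recurrent iff its scalar spectral measure is purely singular. If $\bs\mu$ is purely singular, the Lebesgue decomposition of $\bs\mu$ induces one for each $\mu_\psi$, so every $\mu_\psi$ is purely singular and (2) follows. Conversely, write $\bs\mu=\bs{w}(\theta)\frac{d\theta}{2\pi}+\bs\mu_{\text{sing}}$; if every $\psi\in V$ is recurrent then $\<\psi|\bs{w}(\theta)\psi\>$ vanishes a.e.\ for every such $\psi$. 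Since $\bs{w}(\theta)\ge 0$, a zero diagonal entry forces $\bs{w}(\theta)^{1/2}\psi=0$, hence $\bs{w}(\theta)\psi=0$. Applied to an orthonormal basis of $V$ and discarding the finite union of the corresponding null sets, this yields $\bs{w}(\theta)=0$ a.e., so the absolutely continuous part of $\bs\mu$ is zero and (3) holds.

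The main obstacle I anticipate is precisely the quantifier exchange. A priori each state $\psi$ carries its own null set of exceptional angles, and without $\dim V<\infty$ one cannot assemble these into a uniform null set outside of which an operator-level identity holds. Finite-dimensionality sidesteps this once and for all by reducing each ``for every $\psi\in V$'' condition to a finite check on a basis, after which positivity of $\bs w(\theta)$ (for (2)$\Rightarrow$(3)) and unitarity of $\bs f(e^{i\theta})$ on a basis (for (1)$\Rightarrow$(4)) upgrade the basis-wise information to the required operator statement. Everything else is routine given the previous Proposition and the spectral characterization of state recurrence borrowed from \cite{GVWW}.
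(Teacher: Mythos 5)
Your proof is correct and follows essentially the same route as the paper: $(3)\Leftrightarrow(4)$ from the preceding Proposition, $(2)\Leftrightarrow(3)$ via the scalar measures $\mu_\psi=\langle\psi|\bs\mu(\cdot)\psi\rangle$ together with the state-recurrence result of \cite{GVWW}, and $(1)\Leftrightarrow(4)$ via the a.e.\ unitarity of the boundary values of $\bs{f}$. The only cosmetic difference is in how the quantifier exchange is closed: the paper integrates first, reducing $(1)$ to the single operator identity $\bs{R}=I_V$ and then using positivity of $I_V-\bs{f}\bs{f}^\dag$, and uses the polarization identity for $(2)\Rightarrow(3)$, whereas you run a finite orthonormal basis through a finite union of null sets and invoke positivity of $\bs{w}(\theta)$; both devices are valid and rest on $\dim V<\infty$ in the same way.
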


\begin{proof}
Bearing in mind that $\bs{f}(e^{i\theta}) \bs{f}(e^{i\theta})^\dag \le I_V$ for a.e.~$\theta$ we find that
\[
\begin{aligned}
 R(\psi)=1, \, \forall\psi\in V & \,\Leftrightarrow\, \bs{R}=I_V \,\Leftrightarrow\, \bs{f}(z) \text{ is inner} \,\Leftrightarrow\,
 \\
 & \,\Leftrightarrow\, \bs\mu(d\lambda) \text{ is purely singular}.
\end{aligned}
\]
On the other hand, the spectral measure of a state $\psi$ is given by $\mu_\psi(d\lambda)=\<\psi|E(d\lambda)\psi\>$, where $E(d\lambda)$ is the
spectral measure of $U$. If $\psi\in S_V$, then clearly $\mu_\psi(d\lambda)=\<\psi|\bs\mu(d\lambda)\psi\>$. Hence, the polarization identity
shows that $\bs\mu(d\lambda)$ is purely singular if and only if $\mu_\psi(d\lambda)$ is purely singular for any $\psi\in S_V$. This
is equivalent to stating that all the states of $V$ are recurrent.
\end{proof}

We will say that {\it a subspace $V$ is recurrent if every state $\psi\in S_V$ returns to $V$ with probability one}, i.e.~if
every state $\psi\in S_V$ is
$V$-recurrent. This notion obviously coincides with the mere existence of $V$-recurrent states for the case $\dim V=1$ defining state recurrence;
otherwise, the presence of $V$-recurrent states does not necessarily mean that they exhaust all of $S_V$.

The above theorem is the generalization of \cite[Theorem 1]{GVWW} to subspace recurrence. It states that the finite-dimensional recurrent subspaces
are exactly those whose states are all recurrent. As a consequence, any subspace of a finite-dimensional recurrent subspace is recurrent too. Another
way to restate the previous result is to say that the finite-dimensional recurrent subspaces are those contained in the singular subspace of the
unitary step $U$.

We stress that despite the close analogy between Theorem \ref{thm:REC} and \cite[Theorem 1]{GVWW}, new situations appear when extending the concept
of state recurrence to subspaces, even if one restricts oneself to a finite-dimensional subspace $V$. In contrast to the case of state recurrence,
the presence of $V$-recurrent states does not imply the existence of a singular subspace for $U$ since such $V$-recurrent states need not cover all
of $S_V$. This is clearly shown, for instance, by the shift $U|x\>=|x+1\>$, $x\in\mathbb{Z}$, in $\ell^2(\mathbb{Z})$, since the state $|0\>$ is
$V$-recurrent for $V=\spn\{|0\>,|1\>\}$, but $U$ has no singular subspace. Moreover, the existence of a singular part for the measure
$\bs\mu(d\lambda)$ of a subspace $V$ does not ensure the presence of $V$-recurrent states. Example~\ref{ex:SHIFT+EIG-Schur} illustrates this fact
because $V=\spn\{\psi,\phi\}$ is free of $V$-recurrent states, while the related measure $\bs\mu(d\lambda)$ has two mass points, but is not purely
singular due to the absolutely continuous part, $|\phi\>\<\phi|\frac{d\theta}{4\pi}$. Additional examples of these phenomena will appear in
Section~\ref{sec:SITE}.

\section{Expected return time to a subspace} \label{sec:EXP-TIME}

The expected $V$-return time of a state $\psi\in S_V$, defined by means of
\[
 \tau(\psi) = \sum_{n\ge1}n\|\bs{a}_n\psi\|^2,
\]
makes sense whenever $\psi$ is $V$-recurrent, i.e.~$\sum_{n\ge1}\|\bs a_n\psi\|^2=1$. Otherwise, $\tau(\psi)$ is naturally
taken to be infinite.

Like the $V$-return probability $R(\psi)$, the expected $V$-return time of any state $\psi\in V$ can be expressed in terms of the $V$-survival
probabilities $s_n(\psi)=\|\tilde{U}^n\psi\|^2$. Using once again the relation \eqref{eq:an-sn} we find that
\[
  \sum_{n=1}^N n\|\bs a_n\psi\|^2 = \sum_{n=0}^{N-1}\|\tilde U^n\psi\|^2 - N\|\tilde U^N\psi\|^2.
\]
Therefore, the series on the right hand side must be divergent whenever $\tau(\psi)=\infty$. On the other hand, Lemma \ref{lem:SURVIVAL-2} of
Appendix \ref{app:SITE-SCHUR} shows that $\tau(\psi)<\infty$ implies $\lim_{N\to\infty}N\|\tilde{U}^N\psi\|^2=0$. Hence, the equality
\[
 \tau(\psi) = \sum_{n\geq0} \|\tilde{U}^n\psi\|^2
\]
always holds, which means that {\it the expected $V$-return time of a $V$-recurrent state is equal to the sum of its $n$-step $V$-survival
probabilities}.

We can also express $\tau(\psi)$ in terms of the Schur function $\bs{f}(z)$ of $V$. For any $V$-recurrent state $\psi$ we can write
\[
\begin{aligned}
 \tau(\psi)
 & = \lim_{r\to1^-} \int_0^{2\pi} \<\psi|\hat{\bs{a}}(re^{i\theta})^\dag\partial_\theta\hat{\bs{a}}(re^{i\theta})\psi\> \, \frac{d\theta}{2\pi i}
 \\
 & = 1 + \delta\tau(\psi),
 \quad
 \delta\tau(\psi) =
 \lim_{r\to1^-} \int_0^{2\pi} \<\psi|\partial_\theta\bs{f}(re^{i\theta})\bs{f}(re^{i\theta})^\dag\psi\> \, \frac{d\theta}{2\pi i}.
\end{aligned}
\]
In particular, if $V$ is recurrent and its inner Schur function $\bs{f}(z)$ has an analytic extension to a neighborhood of the closed unit disk,
e.g.~if $\bs{f}(z)$ is a rational inner function, then we can write
\begin{equation} \label{eq:BERRY}
 \tau(\psi) = \int_0^{2\pi} \<\psi(\theta)|\partial_\theta\psi(\theta)\> \, \frac{d\theta}{2\pi i},
 \qquad \psi(\theta)=\hat{\bs{a}}(e^{i\theta})\psi,
\end{equation}
where $\psi(\theta)$, $\theta\in[0,2\pi]$, traces out a closed curve on the sphere $S_V$ due to the unitarity of $\hat{\bs{a}}(e^{i\theta})$. This
simple result has a nice interpretation since it relates $\tau(\psi)$ to a kind of {\bf Berry's geometrical phase} \cite{BERRY,SIMON-Berry}. More
precisely, {\it the expected $V$-return time of a state $\psi\in S_V$ is $-1/2\pi$ times the {\bf Aharonov-Anandan phase} \cite{AA} associated
with the loop $\hat{\bs{a}}(e^{i\theta})\psi \colon S^1 \to S_V$}.

In the case of state recurrence, \cite{GVWW} proves that the states $\psi$ with a finite expected return time are characterized by a finitely
supported spectral measure $\mu_\psi(d\lambda)$, thus by a rational inner Schur function $f_\psi(z)$. Further, \cite{GVWW} also finds that $\tau(\psi)$
must be a positive integer whenever it is finite because of its topological meaning: $\tau(\psi)$ is the winding number of $\hat{a}_\psi(e^{i\theta})
\colon S^1 \to S^1$, where $\hat{a}_\psi(z)=z\overline{f_\psi}(z)$ is the first return generating function of $\psi$.

In contrast to a winding number, the Aharonov-Anandan phase is not necessarily an integer because it reflects a geometric rather than a topological
property of a closed curve. The expression \eqref{eq:BERRY} for $\tau(\psi)$ is reparametrization invariant, and changes by an integer under closed
$S^1$ gauge transformations $\psi(\theta)\to\tilde\psi(\theta)=e^{i\varsigma(\theta)}\psi(\theta)$, $\tilde\psi(2\pi)=\tilde\psi(0)$. This means that
$\tau(\psi)$ is a geometric property of the unparametrized image of $\psi(\theta)$ in $S_V$, while $e^{i2\pi\tau(\psi)}$ is a geometric property of
the corresponding closed curve in the projective space of rays of $S_V$ whose elements are the true physical states of $V$. In fancier language,
$S_V$ is a fiber bundle over such a projective space with structure group $S^1$, and $e^{-i2\pi\tau(\psi)}$ is the holonomy transformation associated
with the usual connection given by the parallel transport defined by $\<\psi(t)|\partial_t\psi(t)\>=0$ \cite{AA}.

As a consequence, we cannot expect for $\tau(\psi)$ to be an integer for subspaces $V$ of dimension greater than one. The following example shows
explicitly that the expected $V$-return time can be non-integer valued, a fact that will also be illustrated by the applications discussed
in Section~\ref{sec:SITE}. Like Example~\ref{ex:SHIFT+EIG-Schur}, the following example demonstrates the generating
function approach to computing the expected $V$-return time on a simple problem where other methods are available.

\begin{ex} \label{ex:FINITE}
Given a three-dimensional Hilbert space $\mathcal{H}$ with orthonormal basis $\{|0\>,|1\>,|2\>\}$, let $U$ be the cyclic shift $U|0\>=|1\>$,
$U|1\>=|2\>$, $U|2\>=|0\>$. Any subspace of $\mathcal{H}$ is recurrent in this case because the corresponding spectral measure is necessarily
finitely supported. Therefore, the expected $V$-return time makes sense for any state $\psi_{\alpha,\beta}=\alpha|0\>+\beta|1\>\in S_V$ of the
subspace $V=\spn\{|0\>,|1\>\}$. Then, $\tilde{U}\psi_{\alpha,\beta}=\beta|2\>$ and $\tilde{U}^n\psi_{\alpha,\beta}=0$ for $n\ge2$, while the first
$V$-return amplitudes are given by
\[
 \bs{a}_1\psi_{\alpha,\beta} = \alpha|1\>, \qquad \bs{a}_2\psi_{\alpha,\beta} = \beta|0\>, \qquad \bs{a}_n = \bs0 \; \text{ for } \; n\ge3.
\]
We conclude that
\[
 \tau(\psi_{\alpha,\beta}) = 1+|\beta|^2,
\]
which is an integer only for the states $|0\>$ and $|1\>$.

We can recover this result using the generating function approach. Identifying any operator on $V$ with its matrix representation in the basis
$\{|0\>,|1\>\}$, the $V$-return amplitudes
\[
 \bs\mu_{3n} = I_2,
 \quad
 \bs\mu_{3n+1} = \begin{pmatrix} 0 & 0 \\ 1 & 0 \end{pmatrix},
 \quad
 \bs\mu_{3n+2} = \begin{pmatrix} 0 & 1 \\ 0 & 0 \end{pmatrix},
 \quad
 n\ge0.
\]
yield the generating functions
\[
 \hat{\bs\mu}(z) = \frac{1}{1-z^3} \begin{pmatrix} 1 & z^2 \\ z & 1 \end{pmatrix},
 \qquad
 \hat{\bs{a}}(z) = \begin{pmatrix} 0 & z^2 \\ z & 0 \end{pmatrix}.
\]
Therefore, we can recover the previous expected $V$-return time from
\[
 \tau(\psi_{\alpha,\beta}) =
 \begin{pmatrix} \overline\alpha & \overline\beta \end{pmatrix}
 \bs\tau
 \begin{pmatrix} \alpha \\ \beta \end{pmatrix},
 \quad
 \bs\tau =
 \int_0^{2\pi} \hat{\bs{a}}(e^{i\theta})^\dag \partial_\theta\hat{\bs{a}}(e^{i\theta}) \frac{d\theta}{2\pi i} =
 \begin{pmatrix} 1 & 0 \\ 0 & 2 \end{pmatrix}.
\]

We can also compute the average of the expected $V$-return time over the sphere of states $S_V$. If $\sigma(d\alpha\,d\beta)$ is the constant
probability measure on the 3-sphere $S^3=\{(\alpha,\beta)\in\mathbb{C}^2 : |\alpha|^2+|\beta|^2=1\}$,
\[
 \dashint_{S_V} \tau(\psi)\,d\psi = \int_{S^3} (1+|\beta|^2)\,\sigma(d\alpha\,d\beta) = \frac{3}{2},
\]
which is not an integer but a rational number. \hfill $\scriptstyle\blacksquare$
\end{ex}

The above example suggests that, although the geometrical phases giving the expected $V$-return time can be arbitrary, their average over the states
of $V$ is always a rational number. We will see that this is indeed the case, but before doing that we will express such an average in a more
convenient way.

Using the monotone convergence theorem for $\dashint_{S_V} d\psi$ and the relation \eqref{eq:AVERAGE} we find that the averaged expected $V$-return
time is
\[
\begin{aligned}
 \dashint_{S_V} \tau(\psi) \, d\psi
 & = \sum_{n\ge1} \; \dashint_{S_V} n\|\bs{a}_n\psi\|^2 \, d\psi = \frac{1}{\dim V} \sum_{n\ge1}n\|\bs{a}_n\|^2_F
 \\
 & = \frac{1}{\dim V} \lim_{r\to1^-} \int_0^{2\pi} \tr\left(\hat{\bs{a}}(re^{i\theta})^\dag \partial_\theta \hat{\bs{a}}(re^{i\theta})\right)
 \, \frac{d\theta}{2\pi i},
 \\
 & = 1 + \frac{1}{\dim V} \lim_{r\to1^-} \int_0^{2\pi} \tr\left(\partial_\theta \bs{f}(re^{i\theta}) \bs{f}(re^{i\theta})^\dag\right)
 \, \frac{d\theta}{2\pi i},
\end{aligned}
\]
where we assume that the subspace $V$ is recurrent.

Apart from its own interest, this average is a useful tool to characterize those recurrent subspaces $V$ with $\tau(\psi)<\infty$ for all $\psi\in
S_V$: they are exactly the subspaces $V$ with finite average $\dashint_{S_V}\tau(\psi)\,d\psi$ due to the relation $\|\bs{a}_n\psi\|^2 \le
\|\bs{a}_n\|_F^2 = \sum_k\|\bs{a}_n\phi_k\|^2$ for any state $\psi$ and orthonormal basis $\phi_k$ of $V$. This means that the presence of a single
state $\psi\in S_V$ with $\tau(\psi)=\infty$ implies the divergence of the average $\dashint_{S_V} \tau(\psi) \, d\psi$.

The reason for this is that the set of states $\psi\in S_V$ with finite $\tau(\psi)$ has the structure of the unit sphere of a
subspace of $V$. This follows from the inequality $\|\bs{a}_n(\alpha\psi+\beta\phi)\|^2 \le
2|\alpha|^2\|\bs{a}_n\psi\|^2+2|\beta|^2\|\bs{a}_n\phi\|^2$, which gives $\tau(\alpha\psi+\beta\phi) \leq 2|\alpha|^2 \tau(\psi) + 2|\beta|^2
\tau(\phi)$. Therefore, unless this set exhausts all of $S_V$, it has null Lebesgue measure with respect to the uniform measure $d\psi$. Thus,
$\tau(\psi)$ diverges $d\psi$-a.e.~on $S_V$, which causes the divergence of the average $\dashint_{S_V} \tau(\psi) \, d\psi$.

The following theorem characterizes the subspaces $V$ with a finite averaged expected $V$-return time
and gives a formula for this average.
It can be considered as the extension to subspaces of the results given in \cite[Theorem 2]{GVWW}. A key ingredient will be the determinant
$\det\bs{T}$ of an operator $\bs{T}$ on $V$, that is, the determinant of any matrix representation of $\bs{T}$.

\begin{thm}
Consider a unitary step $U$ and a finite-dimensional subspace $V$ with spectral measure $\bs\mu(d\lambda)$, Schur function $\bs{f}(z)$ and first
$V$-return generating function $\hat{\bs{a}}(z)=z\bs{f}^\dag(z)$. Then, the following statements are equivalent:
\begin{enumerate}
\item All the states of $V$ are $V$-recurrent with a finite expected $V$-return time.
\item All the states of $V$ are recurrent with a finite expected return time.
\item $\bs\mu(d\lambda)$ is a sum of finitely many mass points.
\item $\bs{f}(z)$ is rational inner.
\item $\det\bs{f}(z)$ is rational inner.
\end{enumerate}
Under any of these conditions, the average of the expected $V$-return time is
\[
 \dashint_{S_V}\tau(\psi)\,d\psi = \frac{K}{\dim V}
\]
with $K$ a positive integer that can be computed equivalently as
\begin{equation}\label{eq:K:def}
 K = \sum_k \dim(E_kV) = \sum_k \rank\bs\mu(\{\lambda_k\}) = \deg\det\hat{\bs{a}}(e^{i\theta}),
\end{equation}
where $\lambda_k$ are the mass points of $\bs\mu(d\lambda)$, $E_k=E(\{\lambda_k\})$ are the orthogonal projectors onto the corresponding eigenspaces
of $U=\int\lambda\,E(d\lambda)$ and $\deg\det\hat{\bs{a}}(e^{i\theta})$ is the degree of $\det\hat{\bs{a}}(e^{i\theta}) \colon S^1 \to S^1$, i.e.~its
{\bf winding number}, which coincides with the number of the zeros of $\det\hat{\bs{a}}(z)$ inside the unit disk,
counting multiplicity.
\end{thm}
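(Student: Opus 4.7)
The plan is to partition the equivalences into matrix function theory for (iii)$\Leftrightarrow$(iv)$\Leftrightarrow$(v), polarization for (ii)$\Leftrightarrow$(iii), the Berry-phase formula for (iii)$\Rightarrow$(i), and a Dirichlet-space argument for (i)$\Rightarrow$(iii); then I would derive the formula by identifying the average as a winding number.

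First I would dispatch the matrix-theoretic block. (iii)$\Leftrightarrow$(iv) is the earlier Proposition, and (iv)$\Rightarrow$(v) is automatic. For (v)$\Rightarrow$(iv), I would note that the singular values $\sigma_j(\theta)\le 1$ of $\bs{f}(e^{i\theta})$ satisfy $\prod_j\sigma_j=|\det\bs{f}|=1$ a.e., so each $\sigma_j=1$ a.e.~and $\bs{f}$ is inner; the finite degree of the scalar Blaschke product $\det\bs{f}$ then forces the Blaschke-Potapov factorization of $\bs{f}$ to have only finitely many factors (any nontrivial singular inner component $\bs{S}$ of $\bs{f}$ would have $|\det\bs{S}|=1$ in $\mbb{D}$, hence $\bs{S}$ constant by the inner-plus-invertible-in-$H^\infty$ argument), making $\bs{f}$ rational. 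For (ii)$\Leftrightarrow$(iii), I would apply \cite[Thm.~2]{GVWW} to $\mu_\psi(d\lambda)=\<\psi|\bs{\mu}(d\lambda)\psi\>$ and use the polarization identity to transfer finite support between $\bs{\mu}$ and its scalar projections. For (iii)$\Rightarrow$(i), rational inner $\bs{f}$ extends analytically to a neighborhood of $\overline{\mbb{D}}$, so \eqref{eq:BERRY} represents $\tau(\psi)$ as the integral of a bounded function.

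The delicate direction is (i)$\Rightarrow$(iii), which I would approach through the determinant. Theorem~\ref{thm:REC} forces $\bs{f}$ inner, whence $\det\bs{f}$ is scalar inner. Finiteness of $\tau(\psi)$ for every $\psi\in S_V$, combined with the Frobenius-norm bound from the text, gives $\sum_nn\|\bs{a}_n\|_F^2<\infty$; writing $\bs{a}_n=\bs{b}_{n-1}^\dag$ for the Taylor coefficients $\bs{b}_n$ of $\bs{f}$, this is the matrix Dirichlet condition $\int_{\mbb{D}}\|\bs{f}'(z)\|_F^2\,dA(z)<\infty$. Differentiating the determinant via the cofactor expansion $(\det\bs{f})'=\sum_i\det M_i$ (with $M_i$ replacing the $i$-th column of $\bs{f}$ by its derivative) and bounding by operator norms yields $|(\det\bs{f})'|\le C\|\bs{f}'\|_F$ for a constant $C=C(\dim V)$, so $\det\bs{f}$ lies in the scalar Dirichlet space. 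The classical Carleson-type fact that a scalar inner function in the Dirichlet space is a finite Blaschke product then forces $\det\bs{f}$ rational inner, which is (v); applying (v)$\Rightarrow$(iv) completes the chain to (iii). The main obstacle here is controlling the singular inner part of $\bs{f}$; reducing to the scalar problem via $\det\bs{f}$ avoids appealing to a direct matrix Dirichlet-space result.

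Finally, for the formula, I would combine the direct expansion $K:=\dim V\cdot\dashint\tau=\sum_{n\ge1}n\|\bs{a}_n\|_F^2$ with the Abel-limit form of the average. Using $\tr(A^\dag\partial_\theta A)=\partial_\theta\log\det A$ a.e.~on $S^1$ (valid because $\hat{\bs{a}}$ is unitary there), the latter identifies $K$ with the winding number of $\det\hat{\bs{a}}|_{S^1}$; the argument principle on the pole-free analytic function $\det\hat{\bs{a}}(z)$ equates this with the number of zeros in $\mbb{D}$ counted with multiplicity. The identity $\rank\bs{\mu}(\{\lambda_k\})=\dim(E_kV)$ follows from $\bs{\mu}(\{\lambda_k\})=PE_kP=(E_kP)^\dag(E_kP)$, so $\rank\bs{\mu}(\{\lambda_k\})=\rank(E_kP)=\dim(E_kV)$, yielding $\sum_k\rank\bs{\mu}(\{\lambda_k\})=\sum_k\dim(E_kV)$. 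To match this with the winding number, I would pass to the minimal $U$-invariant subspace $\bigoplus_kE_kV$ containing $V$ (finite-dimensional by (iii)), where the Blaschke-Potapov factorization of $\bs{f}$ has one factor of rank $\dim(E_kV)$ at each mass point $\lambda_k$; the $z^{\dim V}$ prefactor in $\hat{\bs{a}}=z\bs{f}^\dag$ supplies the remaining winding, giving $\deg\det\hat{\bs{a}}=\dim V+\deg\det\bs{f}=\sum_k\dim(E_kV)$.
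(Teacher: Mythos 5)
Your treatment of the five equivalences is sound and follows the paper's architecture --- $(3)\Leftrightarrow(4)$ from the Proposition, $(2)\Leftrightarrow(3)$ by polarization and \cite[Theorem 2]{GVWW}, and the loop $(1)\Rightarrow(5)\Rightarrow(4)\Rightarrow(1)$ --- but with some genuinely different sub-arguments. Where the paper puts $\det\bs{f}$ into $H^{1/2}$ by invoking the algebra property of $H^{1/2}\cap L^\infty$ on the matrix entries, you reach the same conclusion from the pointwise cofactor/Hadamard bound $|(\det\bs{f})'|\le C\|\bs{f}'\|_F$; this is correct (the columns of $\bs{f}$ have norm at most one) and arguably cleaner, since $H^{1/2}$ for analytic functions is exactly the Dirichlet condition. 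Likewise, your use of Jacobi's formula $\tr(\hat{\bs{a}}^\dag\partial_\theta\hat{\bs{a}})=\partial_\theta\log\det\hat{\bs{a}}$ on the circle bypasses the paper's appeal to Rellich's theorem for a real-analytic eigenbasis, and your observation that $\bs\mu(\{\lambda_k\})=PE_kP=(E_kP)^\dag(E_kP)$ gives $\rank\bs\mu(\{\lambda_k\})=\dim(E_kV)$ more directly than the paper's argument via $\ker P\cap E_kV=\{0\}$. Your $(5)\Rightarrow(4)$ (singular values forced to one a.e., then a Blaschke--Potapov factorization with constant unitary remainder) is essentially the paper's.

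The genuine gap is the final identity $\deg\det\hat{\bs{a}}=\sum_k\dim(E_kV)$. Your proposed mechanism --- that the Blaschke--Potapov factorization of $\bs{f}$ has ``one factor of rank $\dim(E_kV)$ at each mass point $\lambda_k$'' --- cannot work: the Blaschke--Potapov factors are attached to the zeros of $\det\bs{f}$ inside the open unit disk, whereas the mass points lie on the unit circle, and the counts do not match, since $\deg\det\bs{f}=\sum_k\rank\bs\mu(\{\lambda_k\})-\dim V$ (already in the scalar case a measure with $N$ mass points has a Blaschke product of degree $N-1$, so there is no factor-per-mass-point correspondence). The paper establishes this identity by an entirely different route (Theorem \ref{thm:TAU-MASS} of Appendix \ref{app:TAU-AVERAGE}): restrict to the minimal $U$-invariant subspace $\mathcal{H}_0=\oplus_kE_kV$, note that $U\upharpoonright\mathcal{H}_0$ has no eigenvectors in $V^\bot$, write $K=\sum_{n\ge1}n\|\bs{a}_n\|_F^2=\sum_{n\ge0}\|\tilde{U}^nP\|_F^2$, and evaluate the latter by a telescoping Frobenius-norm computation based on $AA^\dag+BB^\dag=I-P$ with $A=(I-P)U(I-P)$, $B=(I-P)UP$, yielding $K=\|P\|_F^2+\|I-P\|_F^2=\dim\mathcal{H}_0=\sum_k\dim(E_kV)$. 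Some argument of this kind is needed to connect the winding number of $\det\hat{\bs{a}}$ to the eigenspace dimensions and complete the proof of \eqref{eq:K:def}.
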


\begin{proof}
We know that $(3) \Leftrightarrow (4)$. On the other hand, the polarization identity implies that $\bs\mu(d\lambda)$ is finitely supported if and
only if the spectral measure $\mu_\psi=\<\psi|E(d\lambda)\psi\>$ of any state $\psi\in V$ is finitely supported too, so $(2) \Leftrightarrow (3)$
by \cite[Theorem 2]{GVWW}.
Hence, to prove the equivalences we only need to argue that $(1) \Rightarrow (5) \Rightarrow (4) \Rightarrow (1)$.
The formula $\dashint_{S_V}\tau(\psi)\,d\psi=\frac{K}{\dim V}$ will be established as a key step of the proof that $(4)\Rightarrow(1)$.

\medskip

\noindent \fbox{$(1) \Rightarrow (5)$}

The recurrence of $V$ means that $\bs{f}(z)$ is inner, i.e. it is unitary a.e.~on the unit circle. Therefore, $\det\bs{f}(z)$ is analytic on the unit
disk with unimodular boundary values, which implies that it is a scalar inner Schur function.

The finiteness of the average for the expected $V$-return time can be expressed by requiring that $\bs{f}(e^{i\theta})$ belongs to the space
$H^{1/2}$ of operator-valued functions $\bs{h}(\theta) = \sum_{n=-\infty}^\infty \bs{h}_n e^{in\theta}$ such that $\|\bs{h}\|_{H^{1/2}}^2 =
\sum_{n=-\infty}^\infty |n|\|\bs{h}_n\|_F^2 < \infty$. Since $\bs{f}(e^{i\theta}) \in L^\infty$, it follows that the matrix element
$\<\phi|\bs{f}(z)\psi\>$ lies in the scalar $H^{1/2} \cap L^\infty$ space for any states $\phi,\psi\in V$. Bearing in mind that $H^{1/2} \cap
L^\infty$ is an algebra, we conclude that $\det\bs{f}(e^{i\theta}) \in H^{1/2}$.

Hence, $\det \bs{f}(z)$ is a scalar inner Schur function in $H^{1/2}$. On the other hand, the one-dimensional analogue of this theorem, already
proved in \cite[Theorem 2]{GVWW}, implies that a scalar inner Schur function is in $H^{1/2}$ if and only if it is rational. Therefore, $\det
\bs{f}(z)$ is rational inner.

\medskip

\noindent \fbox{$(5) \Rightarrow (4)$}

As any rational inner Schur function is a finite Blaschke product,
\[
 \det\bs{f}(z) = \zeta \prod_{j=1}^d b_{w_j}(z), \qquad b_w(z)=\frac{z-w}{1-\overline wz}, \qquad |w|<1, \qquad |\zeta|=1,
\]
with finitely many zeros $w_1,\dots,w_d$ inside the unit disk.

We may factorize these zeros in the operator Schur function $\bs{f}(z)$ by using the rational inner Blaschke-Potapov factors \cite{DFK}
\[
 \qquad \bs{B}_{w,\phi}(z)=I_V+(b_w(z)-1)|\phi\>\<\phi|, \qquad \phi\in S_V.
\]
From Potapov's factorization, for some unit vectors $\phi_j\in\ker\bs{f}(w_j)$,
\[
 \bs{f}(z) = \bs{B}_{w_1,\phi_1}(z) \cdots \bs{B}_{w_d,\phi_d}(z) \, \bs{g}(z),
\]
with $\bs{g}(z)$ analytic on the unit disk. Since $\bs{g}(z)$ must be unitary on the unit circle too, it is an inner Schur function.

Bearing in mind that $\det\bs{B}_{w,\phi}(z)=b_w(z)$, we find that $\det\bs{g}(z)=\zeta$ is a unimodular constant for $|z|<1$. Consequently,
$\bs{g}(z)$ has an analytic inverse on the unit disk and, since the values are unitary on the boundary, not only $\|\bs{g}(z)\|\leq1$, but also
$\|\bs{g}(z)^{-1}\|\leq1$ for $|z|<1$. This implies that $\|\bs{g}(z)\psi\|=1$ for any $\psi\in S_V$ and $|z|<1$. Hence the polarization identity
and the maximum principle
show that $\bs{g}(z)\psi$ is constant on the unit disk. We conclude that $\bs{g}(z)$ is a constant unitary, therefore $\bs{f}(z)$ is rational inner.

\medskip

\noindent \fbox{$(4) \Rightarrow (1)$}

It suffices to derive the formula for $\dashint_{S_V}\tau(\psi)\,d\psi$ since $(1)$ is equivalent to the finiteness of this average.
Any rational inner Schur function $\bs{f}(z)$ is analytic in a neighborhood of the closed unit disk. Therefore, the restriction of
$\hat{\bs{a}}(z)=z\bs{f}^\dag(z)$ to the unit circle is a unitary function $\hat{\bs{a}}(e^{i\theta})$ that is real analytic in $\theta$. Hence, the
expression of the average of $\tau(\psi)$ in terms of radial limits to the unit circle becomes
\[
 \dashint_{S_V}\tau(\psi)\,d\psi = \frac{K}{\dim V},
 \qquad
 K = \int_0^{2\pi} \tr\left(\hat{\bs{a}}(e^{i\theta})^\dag \partial_\theta \hat{\bs{a}}(e^{i\theta})\right) \, \frac{d\theta}{2\pi i}.
\]

The extension of Rellich's theorem \cite{RELLICH,KATO} to real analytic families of normal operators ensures the existence of an orthonormal basis of
$V$ consisting of real analytic eigenvectors $\phi_k(\theta)$ of $\hat{\bs{a}}(e^{i\theta})$ with real analytic eigenvalues $\xi_k(\theta)$.

Differentiating $\hat{\bs{a}}(e^{i\theta}) = \sum_k \xi_k(\theta) |\phi_k(\theta)\>\<\phi_k(\theta)|$, we get
\begin{equation} \label{eq:DER-aa}
\begin{aligned}
 \bs{\hat{a}}^\dag \partial_\theta\bs{\hat{a}}
 = & \sum_k \overline{\xi_k} \partial_\theta\xi_k |\phi_k\>\<\phi_k|
 + \sum_{j \neq k}
 \left( \overline{\xi_j} \xi_k \<\phi_j|\partial_\theta\phi_k\>
 + \<\partial_\theta\phi_j|\phi_k\> \right)
 |\phi_j\>\<\phi_k|,
 \end{aligned}
\end{equation}
where we used $\re\<\phi_k(\theta)|\partial_\theta\phi_k(\theta)\>=0$ (since $\|\phi_k(\theta)\|=1$). This implies
\[
 \tr\left(\hat{\bs{a}}(e^{i\theta})^\dag \partial_\theta \hat{\bs{a}}(e^{i\theta})\right) =
 \sum_k\overline{\xi_k(\theta)}\partial_\theta\xi_k(\theta) =
 \overline{\det\hat{\bs{a}}(e^{i\theta})} \, \partial_\theta\det\hat{\bs{a}}(e^{i\theta}).
\]
Therefore,
\[
 K =
 \int_0^{2\pi} \frac{\partial_\theta\det\hat{\bs{a}}(e^{i\theta})}{\det\hat{\bs{a}}(e^{i\theta})} \, \frac{d\theta}{2\pi i} =
 \deg\det\hat{\bs{a}}(e^{i\theta})
\]
is the winding number of $\det\hat{\bs{a}}(e^{i\theta}) \colon S^1 \to S^1$, which is finite because it coincides with the number of zeros of the
finite Blaschke product $\det\hat{\bs{a}}(z)$.
The alternative expressions given in (\ref{eq:K:def}) for the positive integer $K$ are proved in Theorem \ref{thm:TAU-MASS} of Appendix
\ref{app:TAU-AVERAGE}.
\end{proof}

The previous theorem states that the presence of a finite-dimensional subspace $V$ whose states return to $V$ in a finite expected time is equivalent
to the existence of eigenvectors for the unitary step $U$. However, as mentioned previously, when such states do not cover all of $S_V$, no singular
subspace is required for $U$.
For instance, in $\ell^2(\mathbb{Z})$, $|0\>$ returns to $V=\spn\{|0\>,|1\>\}$ in one step with the unitary shift $U|x\>=|x+1\>$, $x\in\mathbb{Z}$.

The relation \eqref{eq:DER-aa} sheds light on the origin of the non-integer values of the geometrical phases
$\tau(\psi)$ and the reason for the cancellation of the non-integer contributions when averaging over the states $\psi \in S_V$. The identity
\[
 \dashint_{S_V} \<\psi|\eta\>\<\phi|\psi\>\,d\psi  = \frac{\<\phi|\eta\>}{\dim V}, \qquad \forall \phi,\eta\in\mathcal{H},
\]
shows that such an average cancels the off-diagonal terms in \eqref{eq:DER-aa} and transforms each projector $|\phi_k(\theta)\>\<\phi_k(\theta)|$
into the constant $1/\dim V$. This eliminates the dependence on $\phi_k(\theta)$ so that the only contributions to $\dashint_{S_V}\tau(\psi)\,d\psi$
come from the winding numbers of the eigenvalues $\xi_k(\theta)$, which are integers. However, if we do not perform the average, the transport of the
basis $\phi_k(\theta)$ along the closed path gives off-diagonal contributions, and also modifies the diagonal ones due to the $\theta$-dependence of
the projectors $|\phi_k(\theta)\>\<\phi_k(\theta)|$, causing the emergence of non-integer values. Nevertheless, $\tau(\psi)$ is independent of the
choice of basis functions $\phi_k(\theta)$ because the definition of the expected $V$-return time does not require such a choice. In other words,
while the expected $V$-return time $\tau(\psi)$ is a non-integer geometric property of the loop in $S_V$ defined by
$\psi(\theta)=\hat{\bs{a}}(e^{i\theta})\psi$, the average over $S_V$ of these geometrical phases leads to a true topological invariant, the winding
number of $\det\hat{\bs{a}}(e^{i\theta})$, which is therefore an integer.

Concerning the meaning of the requirement that $\bs\mu(d\lambda)$ is finitely supported, remember that $\bs\mu(d\lambda)=PE(d\lambda)P$, so the mass
points of $\bs\mu(d\lambda)$ must also be mass points of $E(d\lambda)$, which means that they are eigenvalues of $U$. If $\bs\mu(d\lambda)$ is a sum
of finitely many mass points $\lambda_k$, then $P = P \int E(d\lambda) P = P \sum_kE_k P$, which is equivalent to stating that $V \subset \oplus_k
E_kV$. Indeed, if $\psi\in V$ and $P=PQP$ with $Q=\sum_kE_k$, then $\psi=PQ\psi$. So $Q\psi=\psi+\phi$ with $\phi\in V^\perp$.  $\|Q\|=1$ implies
$\phi=0$. Conversely, if every $\psi\in V$ may be written $\psi=\sum_k E_k\psi_k$ with $\psi_k\in V$, then $Q\psi=\psi$ for all $\psi\in V$. Hence,
$PQP=P$. In summary, since $E_k$ is the orthogonal projector onto the eigenspace associated with $\lambda_k$, we find that the spectral measure of
$V$ is finitely supported if and only if $V$ is a subspace of a finite sum of eigenspaces of $U$.

The $U$-invariant subspace $\mathcal{H}_0=\oplus_k E_kV$ whose dimension gives $K$ has a remarkable meaning. Consider an arbitrary $U$-invariant
subspace $\hat{\mathcal{H}}$. The restriction $\hat{U}=U\upharpoonright\hat{\mathcal{H}}$ of $U$ to $\hat{\mathcal{H}}$ is a unitary with spectral
measure $\hat{E}(d\lambda)=E(d\lambda)\upharpoonright\hat{\mathcal{H}}$ and thus with eigenspaces $\hat{V}_k=E_k\hat{\mathcal{H}}$. Hence, if
$V\subset\hat{\mathcal{H}}$, then $E_kV\subset\hat{V}_k$, which implies that $\mathcal{H}_0\subset\hat{\mathcal{H}}$. This identifies $\mathcal{H}_0$
as the minimal $U$-invariant subspace in which $V$ lives, and allows us to state that {\it the average of the expected $V$-return time is the ratio
$\dim(\oplus_kE_kV)/\dim V$ between the dimension of the minimal $U$-invariant subspace containing $V$ and the dimension of $V$}.

In the case $\dim \mathcal{H} < \infty$ the integer $K$ can also be calculated as $\dim \mathcal{H} - \nu$, where $\nu$ is the number of linearly
independent eigenvectors of $U$ lying in $V^\bot$ (see Appendix \ref{app:TAU-AVERAGE}).

An additional consequence of the previous theorem is an identity that relates the degree of the determinant of a $D$-dimensional rational inner
Schur function $\bs{f}(z)$ and the masses of the corresponding measure $\bs\mu(d\lambda)$. Since $\hat{\bs{a}}(z)=z\bs{f}^\dag(z)$ we have that
$\det\hat{\bs{a}}(z) = z^D \overline{\det \bs{f}}(z)$, thus
\[
 \deg\det\bs{f}(e^{i\theta}) = \sum_k\rank\bs\mu(\{\lambda_k\}) - D,
\]
where $\lambda_k$ are the mass points of $\bs\mu(d\lambda)$.

\section{Applications: site recurrence} \label{sec:SITE}

The most natural example of subspace recurrence is what we can call site recurrence of a quantum walk. Consider a quantum walk on a spatial lattice
$\Lambda$ with a set of internal degrees of freedom $\Sigma$, i.e. the basis states $\phi_{x,\alpha}$ are labelled by the site index $x\in\Lambda$
running over the lattice and an additional index $\alpha\in\Sigma$ that refers to the possible basis states for each site. If we consider the
evolution starting from a state $\psi=\sum_{\alpha\in\Sigma}c_\alpha\phi_{x,\alpha}$ localized at a given site $x$, we can ask for the return
probability to the original site. In other words, we deal with the notion of $V_x$-recurrence for the subspace
$V_x=\mathrm{span}\{\phi_{x,\alpha}\}_{\alpha\in\Sigma}$.

In this section we will present explicit analytical results for the site recurrence of any coined quantum walk on the line, the half-line and a
finite one-dimensional lattice too. Some numerical calculations (given later) will show the effect of the dimension and the geometry of the lattice
on site recurrence.

\subsection{Site recurrence in one dimension} \label{ssec:SITE-1D}

Coined walks on the line live on the Hilbert space $\ell^2(\mathbb{Z})\otimes\mathbb{C}^2$, whose basis states we write as $|x,\alpha\>$,
$x\in\mathbb{Z}$, $\alpha\in\{\ua,\da\}$. The step operator $U = SC$ is factorized into a shift $S|x,\ua\> = |x+1,\ua\>$, $S|x,\da\> = |x-1,\da\>$
and a coin operation $C$ acting at each site subspace $V_x=\mathrm{span}\{|x,\ua\>,|x,\da\>\}$ separately as a two-dimensional unitary operator.
Bearing in mind the phase factor freedom of the basis states, we can assume without loss that the matrix representation in the basis
$\{|x,\ua\>,|x\da\>\}$ of the coin operator at site $x$ has the form
\[
 C_x = (\< x,\alpha| C |x,\beta\>)_{\alpha,\beta\in\{\ua,\da\}}
     = \begin{pmatrix} \rho_{2x} & -\gamma_{2x} \\ \overline\gamma_{2x} & \rho_{2x} \end{pmatrix},
 \quad
 \begin{aligned} & |\gamma_{2x}|\le1, \\ & \rho_{2x}=\sqrt{1-|\gamma_{2x}|^2}. \end{aligned}
\]
The convenience of using even indices for the coefficients of $C_x$ will become clear later on.

Actually, we will assume the strict inequality $|\gamma_{2x}|<1$. Otherwise the quantum walk would decouple trivially into independent ones living on
orthogonal subspaces generated by basis vectors that could be analyzed independently.

As we will show, the study of site recurrence for one-dimensional quantum walks is greatly aided by the theory of orthogonal polynomials on the unit
circle and CMV matrices; (for a very quick and modern introduction to both topics see \cite{SIMONfoot}). Indeed, they provide efficient tools for the
analytic computation of the $2 \times 2$ Schur functions that encode the site recurrence. Note that the polynomial approach becomes simpler for
quantum walks on the half-line or a finite one-dimensional lattice than for the case of the whole line \cite{CGMV1,CGMV2,CGMV3}. In the former case,
the Schur function is scalar-valued, while in the latter case, it is matrix-valued.

Our first step will be the study of site recurrence for coined walks on the half-line with reflecting boundary conditions at the origin. They are
defined as in the case of the whole line simply changing the lattice $\mathbb{Z}$ by $\mathbb{Z}_+=\{0,1,2,\dots\}$ and the shift at site 0 by
$S|0,\da\>=|0,\ua\>$. The case of a finite lattice $\mathbb{Z}_N=\{0,1,\dots,N-1\}$ with reflecting boundary conditions at the edges simply requires
an additional modification of the shift at the right edge by $S|N-1,\ua\>=|N-1,\da\>$.

\subsubsection{Coined walks on the half-line} \label{sssec:HL}

According to the previous paragraph, the transition matrix of a coined walk on the half-line has the form
\begin{equation} \label{eq:COIN-CMV-Z+}
 \mathcal{C} = (\< e_j|Ue_k\>)_{j,k\in\mathbb{Z}_+}
 = \text{\small
 $\begin{pmatrix}
 \overline{\gamma}_0 & \kern-3pt \rho_0 & \kern-3pt 0 & \kern-3pt 0 & \kern-3pt 0 & \kern-3pt 0 & \dots
 \\
 0 & \kern-3pt 0 & \kern-3pt \overline\gamma_2 & \kern-3pt \rho_2 & \kern-3pt 0 & \kern-3pt 0 & \dots
 \\
 \rho_0 & \kern-3pt -\gamma_0 & \kern-3pt 0 & \kern-3pt 0 & \kern-3pt 0 & \kern-3pt 0 & \dots
 \\
 0 & \kern-3pt 0 & \kern-3pt 0 & \kern-3pt 0 & \kern-3pt \overline\gamma_4 & \kern-3pt \rho_4 & \dots
 \\
 0 & \kern-3pt 0 & \kern-3pt \rho_2 & \kern-3pt -\gamma_2 & \kern-3pt 0 & \kern-3pt 0 & \dots
 \\
 0 & \kern-3pt 0 & \kern-3pt 0 & \kern-3pt 0 & \kern-3pt 0 & \kern-3pt 0 & \dots
 \\
 0 & \kern-3pt 0 & \kern-3pt 0 & \kern-3pt 0 & \kern-3pt \rho_4 & \kern-3pt -\gamma_4 & \dots
 \\
 \dots & \kern-3pt \dots & \kern-3pt \dots & \kern-3pt \dots & \kern-3pt \dots & \kern-3pt \dots & \dots
 \end{pmatrix}$},
\end{equation}
where we assume the ordering $e_{2x}=|x,\ua\>$, $e_{2x+1}=|x,\da\>$ for $x\in\mathbb{Z}_+$. This matrix is the special case $\gamma_{2k+1}=0$,
$k\in\mathbb{Z}_+$, of the more general unitary
\[
 \mathcal{C} = \text{\small $
 \begin{pmatrix}
 \overline{\gamma}_0 & \kern-5pt \rho_0 & \kern-5pt 0 & \kern-5pt 0 & \kern-5pt 0 & \kern-5pt 0 & \dots
 \\
 \rho_0\overline\gamma_1 & \kern-5pt -\gamma_0\overline{\gamma}_1 & \kern-5pt \rho_1\overline\gamma_2 & \kern-5pt \rho_1\rho_2 & \kern-5pt 0 &
 \kern-5pt 0 & \dots
 \\
 \rho_0\rho_1 & \kern-5pt -\gamma_0\rho_1 & \kern-5pt -\gamma_1\overline{\gamma}_2 & \kern-5pt -\gamma_1\rho_2 & \kern-5pt 0 & \kern-5pt 0 & \dots
 \\
 0 & \kern-5pt 0 & \kern-5pt \rho_2\overline\gamma_3 & \kern-5pt -\gamma_2\overline{\gamma}_3 & \kern-5pt \rho_3\overline\gamma_4 & \kern-5pt
 \rho_3\rho_4 & \dots
 \\
 0 & \kern-5pt 0 & \kern-5pt \rho_2\rho_3 & \kern-5pt -\gamma_2\rho_3 & \kern-5pt -\gamma_3\overline{\gamma}_4 & \kern-5pt -\gamma_3\rho_4 & \dots
 \\
 0 & \kern-5pt 0 & \kern-5pt 0 & \kern-5pt 0 & \kern-5pt \rho_4\overline\gamma_5 & \kern-5pt -\gamma_4\overline{\gamma}_5 & \dots
 \\
 0 & \kern-5pt 0 & \kern-5pt 0 & \kern-5pt 0 & \kern-5pt \rho_4\rho_5 & \kern-5pt -\gamma_4\rho_5 & \dots
 \\
 \dots & \kern-5pt \dots & \kern-5pt \dots & \kern-5pt \dots & \kern-5pt \dots & \kern-5pt \dots & \dots
 \end{pmatrix}$},
 \;
 \begin{aligned}
 & |\gamma_k|<1, \\ & \rho_k=\sqrt{1-|\gamma_k|^2},
 \end{aligned}
\]
known as a CMV matrix with Verblunsky parameters $(\gamma_k)$ \cite{FIVE,MINIMAL,SIMON1,SIMONfoot,SIMON5years,WATKINS}.

CMV matrices are closely related to the orthogonal Laurent polynomials on the unit circle, $X_k(z)$, which are defined by performing the Gram-Schmidt
procedure on the set $\{1,z,z^{-1},z^2,z^{-2},z^3,\dots\}$ in $L^2_\mu$ for a given measure $\mu(d\lambda)$ on $S^1$. They satisfy the recurrence
\[
 X(z) \, \mathcal{C} = z X(z),
 \quad
 X(z) = \begin{pmatrix} X_0(z) & \kern-5pt X_1(z) & \kern-5pt X_2(z) & \kern-5pt \dots \end{pmatrix},
 \quad
 X_0(z)=1.
\]

The CMV shape of the transition matrix provides a direct connection between coined walks on the half-line and orthogonal Laurent
polynomials on the unit circle with null odd Verblunsky parameters.  For more details, see \cite{CGMV1}.

The Verblunsky parameters of a CMV matrix have another important mathematical role: they characterize the Schur function $f(z)$ of the
orthogonality measure $\mu(d\lambda)$ via the Schur algorithm
\[
\begin{aligned}
 & f_0(z)=f(z),
 \\
 & f_{k+1}(z) = \frac{1}{z} \frac{f_k(z)-\gamma_k}{1-\overline\gamma_k f_k(z)}, \quad \gamma_k=f_k(0), \quad k\geq0,
\end{aligned}
\]
which generates a sequence $f_k(z)$ of Schur functions called the iterates of $f(z)$ \cite{SCHUR1,SCHUR2,KHRUSHCHEV,SIMON1,SIMONfoot}. Geronimus'
theorem \cite{SIMON1} implies that the Verblunsky parameters $(\gamma_k)$ in the CMV matrix are equal to the Schur parameters $(\gamma_k)$ in Schur's
algorithm. Note that the $k$-th iterate $f_k(z)$ has Schur parameters $(\gamma_k,\gamma_{k+1},\gamma_{k+2},\dots)$.

Coined walks on the half-line correspond to Schur parameters $(\gamma_0,0,\gamma_2,0,\gamma_4,0,\dots)$. From the Schur algorithm, we see that the
absence of odd terms is equivalent to the Schur function $f(z)$ being an even function of $z$.

The orthogonality measure of a CMV matrix has an infinite support iff there are infinitely many Schur parameters. If the measure is supported on a
finite number of points, say $N$, the corresponding Schur function will have the same finite number of Schur parameters
$(\gamma_0,\gamma_1,\dots,\gamma_{N-1})$, all of them in the open unit disk except the last one, which lies on the unit circle. The reason for this
is that the Schur function is a Blaschke product of degree $N-1$, i.e. a rational inner function, and thus the Schur algorithm stops after $N-1$
steps because it reaches the unimodular constant function $\gamma_{N-1}$. Therefore, an inner Schur function is rational exactly when it has finitely
many Schur parameters.

Together with the standard iterates, the inverse iterates are key in the study of recurrence. If the Schur function $f(z)$ has Schur parameters
$(\gamma_0,\gamma_1,\dots)$, the rational inner Schur function $f^k(z)$ with Schur parameters
$(-\overline\gamma_k,-\overline\gamma_{k-1},\dots,-\overline\gamma_0,1)$ is called the $k$-th inverse iterate of $f(z)$. This terminology was
introduced in \cite{KHRUSHCHEV2}, where the author uses the notation $b_{k-1}(z)$ for the function we denote $f^k(z)$ for convenience.

An important consequence of the above results is that the step operator
$U$ of a coined walk on the half-line is unitarily equivalent to the
multiplication operator on $L^2_\mu$ given by $h(z) \mapsto zh(z)$. The equivalence is established by $e_k \mapsto X_k(z)$, so $\< e_j|U^ne_k \> =
\int \overline{X_j(\lambda)} \, \lambda^n X_k(\lambda) \, \mu(d\lambda)$ for $n\in\mathbb{Z}$. On the other hand, the spectral decomposition $U=\int
\lambda\,E(d\lambda)$ yields $\< e_j|U^ne_k \> = \int \lambda^n \< e_j|E(d\lambda)e_k \>$. This identifies the complex-valued measure
\[
  \<e_j|E(d\lambda)e_k \> = \overline{X_j(\lambda)} \, X_k(\lambda) \, \mu(d\lambda)
\]
in terms of the orthogonal Laurent polynomials and the corresponding orthogonality measure. Hence, the spectral measure
$\mu_\psi(d\lambda)=\<\psi|E(d\lambda)\psi\>$ of any state $\psi=\sum_{k\geq0}c_ke_k$ has the form $\mu_\psi(d\lambda)=|h(\lambda)|^2\mu(d\lambda)$
with $h(z)=\sum_{k\geq0}c_kX_k(z) \in L^2_\mu$, so it is absolutely continuous with respect to $\mu(d\lambda)$. In particular, the orthogonality
measure $\mu(d\lambda)$ coincides with the spectral measure of the first basis state $e_0=|0,\ua\>$.

Concerning site recurrence, identifying the spectral measure $\bs\mu_x(d\lambda)$ of the site subspace $V_x=\mathrm{span}\{e_{2x},e_{2x+1}\}$ with
its matrix representation in the basis $\{e_{2x},e_{2x+1}\}$, we can write
\begin{equation}\label{eq:mux:half:line}
\begin{aligned}
 \bs\mu_x(d\lambda) & = (\< e_j|E(d\lambda)e_k \>)_{j,k\in\{2x,2x+1\}}
 \\
 & = \begin{pmatrix}
 |X_{2x}(\lambda)|^2 & \overline{X_{2x}(\lambda)} X_{2x+1}(\lambda)
 \\
 \overline{X_{2x+1}(\lambda)} X_{2x}(\lambda) & |X_{2x+1}(\lambda)|^2
 \end{pmatrix} \mu(d\lambda).
\end{aligned}
\end{equation}
The corresponding matrix Schur function, obtained in Appendix \ref{app:SITE-SCHUR}, is
\begin{equation} \label{eq:fx:half:line}
 \bs{f}_{\kern-2pt x}(z)
 = \begin{pmatrix} \gamma_{2x} f^{2x-1}(z) & \rho_{2x} f_{2x+1}(z) \\ \rho_{2x} f^{2x-1}(z) & -\overline\gamma_{2x} f_{2x+1}(z) \end{pmatrix},
\end{equation}
where $f_k(z)$ and $f^k(z)$ are the iterates and inverse iterates of the Schur function $f(z)$ corresponding to the orthogonality measure
$\mu(d\lambda)$. More precisely, the Schur parameters of $f_{2x+1}(z)$ and $f^{2x-1}(z)$ are
\[
  (0,\gamma_{2x+2},0,\gamma_{2x+4},\dots)
  \qquad \text{and} \qquad
  (0,-\overline\gamma_{2x-2},0,-\overline\gamma_{2x-4},\dots,-\overline\gamma_0,1),
\]
respectively. In particular, $f^{2x-1}(z)$ is a finite Blaschke
product of degree $2x$.

To obtain the site return probability operator we must simply take into account that $f^k(z)$ is an inner function. Then, the previous matrix Schur
function gives the $V_x$-return probability matrix
\[
\begin{aligned}
 \bs{R}_x & = \int_0^{2\pi} \bs{f}_{\kern-2pt x}(e^{i\theta}) \, \bs{f}_{\kern-2pt x}(e^{i\theta})^\dag \, \frac{d\theta}{2\pi}
 \\
 & = \begin{pmatrix}
 |\gamma_{2x}|^2+\rho_{2x}^2\|f_{2x+1}\|^2 & \rho_{2x}\gamma_{2x}(1-\|f_{2x+1}\|^2)
 \\
 \rho_{2x}\overline\gamma_{2x}(1-\|f_{2x+1}\|^2) & \rho_{2x}^2+|\gamma_{2x}|^2\|f_{2x+1}\|^2
 \end{pmatrix}
 \\
 & = I_2 - (1-\|f_{2x+1}\|^2) \bs\Gamma_x, \quad
 \bs\Gamma_x = \begin{pmatrix} \rho_{2x}^2 & -\rho_{2x}\gamma_{2x} \\ -\rho_{2x}\overline\gamma_{2x} & |\gamma_{2x}|^2 \end{pmatrix},
\end{aligned}
\]
where $I_2$ is the $2 \times 2$ identity matrix and $\|\cdot\|$ is the $L^2$ norm with respect to Lebesgue measure on the unit circle.

The matrix $\bs\Gamma_x$ has eigenvectors $(\gamma_{2x},\rho_{2x})^T$ and $(\rho_{2x},-\overline\gamma_{2x})^T$ with eigenvalues 0 and 1
respectively. Therefore, the maximum and minimum return probabilities to the site $x$ are $1$ and $\|f_{2x+1}\|^2$, attained respectively at
the qubits
\begin{equation} \label{eq:EXTREME}
 \psi_x^{(1)}=\gamma_{2x}|x,\ua\>+\rho_{2x}|x,\da\>, \qquad \psi_x^{(2)}=\rho_{2x}|x,\ua\>-\overline\gamma_{2x}|x,\da\>.
\end{equation}
The average of the return probability to the site $x$ is therefore
\[
 \dashint_{S_{V_x}} \<\psi|\bs{R}_x\psi\> \, d\psi = \frac{\mathrm{Tr}\bs{R}_x}{\dim V_x} = \frac{1+\|f_{2x+1}\|^2}{2}.
\]

From the above results, we find that any site of a coined walk on the half-line has at least one site-recurrent qubit,
i.e.~a qubit that returns to
the site with probability one. The orthogonality measure $\mu(d\lambda)$ of a CMV transition matrix
like \eqref{eq:COIN-CMV-Z+} can be absolutely continuous,
from which it follows that the spectral measure $\mu_\psi(d\lambda)$ of any state is also absolutely continuous.
Therefore, coined walks on the half-line associated with absolutely continuous orthogonality measures $\mu(d\lambda)$ are examples of quantum walks
without a singular subspace for the unitary step $U$, but having a $V_x$-recurrent qubit for any site subspace $V_x$.

Furthermore, if a site is recurrent then all the states of the Hilbert space must be recurrent, which implies that all the sites are simultaneously
recurrent. Indeed, the previous results show that $V_x$ is a recurrent subspace if and only if $f_{2x+1}(z)$ is inner. Since a Schur function $f(z)$
and its iterates $f_k(z)$ are simultaneously inner or non-inner, we conclude that the presence of a recurrent site implies that the orthogonality
measure $\mu(d\lambda)$ is purely singular. Then, not only does every qubit become site recurrent, but every state becomes recurrent as well since
its spectral measure is absolutely continuous with respect to $\mu(d\lambda)$.

A remarkable fact is that the return probability matrix $\bs{R}_x$ for the site $x$ only depends on the Schur parameters
$(\gamma_{2x},\gamma_{2x+2},\gamma_{2x+4},\dots)$. This means that the site return probability of a qubit at a site $x$ does not depend on the coins
$C_y$, $y<x$, of the previous sites. This is the site extension of a result already found in \cite[Section 6.1]{GVWW} for the state recurrence of
coined walks on the half-line.

Concerning the expected return time to the site $x$, we should distinguish two cases depending on whether the measure $\mu(d\lambda)$ is
purely singular or not. If not, the expected $V_x$-return time makes sense (i.e.~it is not necessarily infinite) only for
$\psi_x^{(1)}$ because it is the unique site-recurrent qubit at site $x$. We find that $\tau(\psi_x^{(1)}) = 1 + \delta\tau(\psi_x^{(1)}) = 2x+1$ is
an integer because
\[
\begin{aligned}
 \delta\tau(\psi_x^{(1)}) & = \lim_{r\to1^-} \int_0^{2\pi}
 \begin{pmatrix} \overline\gamma_{2x} & \rho_{2x} \end{pmatrix}
 \partial_\theta\bs{f}_{\kern-2pt x}(re^{i\theta})\bs{f}_{\kern-2pt x}(re^{i\theta})^\dag
 \begin{pmatrix} \gamma_{2x} \\ \rho_{2x} \end{pmatrix}
 \, \frac{d\theta}{2\pi i}
 \\
 & = \int_0^{2\pi} \frac{\partial_\theta f^{2x-1}(e^{i\theta})}{f^{2x-1}(e^{i\theta})} \, \frac{d\theta}{2\pi i} = \deg f^{2x-1},
\end{aligned}
\]
i.e. $\delta\tau(\psi_x^{(1)})$ is the winding number of the finite Blaschke product $f^{2x-1}(e^{i\theta})$ of degree $2x$.

If $\mu(d\lambda)$ is purely singular, then all the sites are recurrent and the expected return time to the site $x$ makes sense for any qubit
$\psi_x = \alpha|x,\ua\>+\beta|x,\da\>$. Just as in the previous case we obtain
\[
 \tau(\psi_x) =
 1
 + |\alpha\overline\gamma_{2x}+\beta\rho_{2x}|^2 \deg f^{2x-1}
 + \lim_{r\to1^-}|\alpha\rho_{2x}-\beta\gamma_{2x}|^2 \, \tau_r(f_{2x+1}),
\]
where, for any Schur function $g(z)=\sum_{n\ge0}c_nz^n$ and $r\in(0,1)$, we use the notation
\begin{equation}\label{eq:tau:r}
 \tau_r(g) = \int_0^{2\pi} \overline{g(re^{i\theta})} \, \partial_\theta g(re^{i\theta}) \, \frac{d\theta}{2\pi i} = \sum_{n\ge1} n|c_n|^2r^{2n}.
\end{equation}
When $g(z)$ is rational inner, obviously $\lim_{r\to1^-} \tau_r(g) = \deg g$. On the other hand, if $g(z)$ is inner but not rational -- i.e. it is
inner with an infinite sequence of Schur parameters -- it is proved in \cite[Theorem 2]{GVWW} that $\lim_{r\to1^-} \tau_r(g)=\infty$. This last one is
the situation for the Schur function $f(z)$ of $\mu(d\lambda)$ and its iterates. Hence, we finally get
\[
 \tau(\psi_x) = \begin{cases} 2x+1, & \mathrm{if} \; \psi_x\in\mathrm{span}\{\psi_x^{(1)}\}, \\ \infty, & \mathrm{otherwise.} \end{cases}
\]
As a consequence, $\dashint_{S_{V_x}}\tau(\psi)\,d\psi=\infty$, which is in agreement with the fact that the spectral measure $\bs\mu_x(d\lambda)$
in (\ref{eq:mux:half:line}) is never finitely supported because the orthogonality measure $\mu(d\lambda)$ has an infinite support.

These results prove that for any site $x$, coined walks on the half-line always have a $V_x$-recurrent state with a finite expected $V_x$-return
time, independently of the presence or absence of a singular subspace for the unitary step $U$.

\subsubsection{Coined walks on a finite lattice} \label{sssec:FL}

The ordering $e_0=|0,\ua\>,e_1=|0,\da\>,\dots,e_{2N-2}=|N-1,\ua\>,e_{2N-1}=|N-1,\da\>$ of the basis states yields the following transition matrix for
a coined walk on the finite lattice $\mathbb{Z}_N$,
\[
 \mathcal{C} =
 \text{\small
 $\begin{pmatrix}
 \overline{\gamma}_0 & \rho_0
 \\
 0 & 0 & \overline\gamma_2 & \rho_2
 \\
 \rho_0 & -\gamma_0 & 0 & 0
 \\
 && 0 & 0
 \\
 && \rho_2 & -\gamma_2
 \\
 &&&& \ddots
 \\
 &&&&& \overline\gamma_{2N-4} & \kern-5pt \rho_{2N-4}
 \\
 &&&&& 0 & \kern-5pt 0
 \\
 &&&&& 0 & \kern-5pt 0 & \kern-5pt \overline\gamma_{2N-2} & \kern-5pt \rho_{2N-2}
 \\
 &&&&& \rho_{2N-4} & \kern-5pt -\gamma_{2N-2} & \kern-5pt 0 & \kern-5pt 0
 \\
 &&&&&&& \kern-5pt \rho_{2N-2} & \kern-5pt -\gamma_{2N-2}
\end{pmatrix}$}.
\]
This transition matrix is a finite CMV matrix with Verblunsky parameters
\[
  (\gamma_0,0,\gamma_2,0,\dots,\gamma_{2N-2},1).
\]
Unlike the case of a semi-infinite CMV matrix, this one generates only a finite segment of $2N$ orthogonal Laurent polynomials.
The orthogonality measure $\mu(d\lambda)$
is again the spectral measure of the first basis state $e_0$, but now it is supported on $2N$ points.
Consequently, for a coined walk on a finite one-dimensional lattice, every subspace is recurrent with a finite expected return time for any state.

Concerning site recurrence, it remains to calculate the expected return time for any qubit. For this purpose we need the Schur function
$\bs{f}_{\kern-2pt x}(z)$ of a site $x$, which is the same one already given for the half-line, but taking into account that the iterate
$f_{2x+1}(z)$ of the Schur function $f(z)$ for $\mu(d\lambda)$ now has a finite segment of Schur parameters
$(0,\gamma_{2x+2},0,\gamma_{2x+4},\dots,0,\gamma_{2N-2},1),$ and thus it is a finite Blaschke product of degree $2(N-1-x)$. We conclude that the
expected $V_x$-return time for a qubit $\psi_x=\alpha|x,\ua\>+\beta|x,\da\>$ is given by the quadratic form
\[
 \tau(\psi_x) =
 \begin{pmatrix} \overline\alpha & \overline\beta \end{pmatrix}
 \bs\tau_{\kern-1pt x}
 \begin{pmatrix} \alpha \\ \beta \end{pmatrix},
\]
with $\bs\tau_{\kern-1pt x} = I_2 + \bs{\delta\tau}_{\kern-1pt x}$ and
\[
\begin{aligned}
 \bs{\delta\tau}_{\kern-2pt x}
 & = \int_0^{2\pi} \partial_\theta\bs{f}_{\kern-2pt x}(e^{i\theta}) \bs{f}_{\kern-2pt x}(e^{i\theta})^\dag \, \frac{d\theta}{2\pi}
 \\
 & = \begin{pmatrix}
 |\gamma_{2x}|^2 \deg f^{2x-1} + \rho_{2x}^2 \deg f_{2x+1} & \rho_{2x}\gamma_{2x} (\deg f^{2x-1} - \deg f_{2x+1})
 \\
 \rho_{2x}\overline\gamma_{2x} (\deg f^{2x-1} - \deg f_{2x+1}) & \rho_{2x}^2 \deg f^{2x-1} + |\gamma_{2x}|^2 \deg f_{2x+1}
 \end{pmatrix}
 \\
 & = \deg f^{2x-1} I_2 + (\deg f_{2x+1} - \deg f^{2x-1}) \bs\Gamma_x.
\end{aligned}
\]

The extreme values of the expected return time for the site $x$, attained at the qubits $\psi_x^{(1)}$ and $\psi_x^{(2)}$ given in
\eqref{eq:EXTREME}, are the integers $1 + \deg f^{2x-1} = 2x+1$ and $1 + \deg f_{2x+1} = 2(N-1-x)+1$, which do not depend on the coins but only on
the number of sites to the left and the right of the site $x$ respectively. However, there exist qubits at site $x$ whose expected return time to
such a site is any value between the above extreme integers, which shows the non-integer nature of the Aharonov-Anandan phase giving the expected
$V_x$-return time. Despite this fact, the average of $\tau(\psi_x)$ is, not only an integer,
\[
 \dashint_{S_{V_x}} \tau(\psi) \, d\psi
 = \frac{\mathrm{Tr} \, \bs\tau_{\kern-1pt x}}{\dim V_x} = \frac{(1 + \deg f^{2x-1}) + (1 + \deg f_{2x+1})}{2} = N,
\]
but also independent of the site $x$ and the coins of the walk: it is always equal to the number of sites of the lattice.

The above average can be calculated also from the degree of
\[
 \det\hat{\bs{a}}_x(z)=\det{z\bs{f}_{\kern-2pt x}^\dag}(z)=-z^2\overline{f^{2x-1}}(z)\overline{f_{2x+1}}(z),
\]
which gives
\[
 \dashint_{S_{V_x}} \tau(\psi) \, d\psi = \frac{\deg\det\hat{\bs{a}}_x}{\dim V_x} = \frac{2 + \deg f^{2x-1} + \deg f_{2x+1}}{2} = N.
\]

\subsubsection{Coined walks on the line} \label{sssec:L}

The transition matrix with respect to the basis ordered as $e_{2x}=|x,\ua\>$, $e_{2x+1}=|x,\da\>$, $x\in\mathbb{Z}$, is now a doubly-infinite CMV
matrix with Verblunsky parameters $(\dots,0,\gamma_{-2},0,\gamma_0,0,\gamma_2,0,\dots)$ \cite{CGMV1,CGMV2,CGMV3}.

As in the previous cases, the analysis of site recurrence for the whole line comes from the Schur function of a site. Appendix \ref{app:SITE-SCHUR}
shows that this Schur function can be read from the case of the half-line by extending to a doubly-infinite sequence of Schur parameters
$(\gamma_k)_{k\in\mathbb{Z}}$ the definitions of $f_k(z)$ and $f^k(z)$ as the Schur functions with infinitely many Schur parameters
$(\gamma_k,\gamma_{k+1},\gamma_{k+2},\dots)$ and $(-\overline\gamma_k,-\overline\gamma_{k-1},-\overline\gamma_{k-2},\dots)$, respectively. In other
words, the expression for $\bs{f}_{\kern-2pt x}(z)$ given in \eqref{eq:fx:half:line} for the half-line also holds for the whole line with
$f_{2x+1}(z)$ and $f^{2x-1}(z)$ corresponding to the Schur parameters $(0,\gamma_{2x+2},0,\gamma_{2x+4},0,\dots)$ and
$(0,-\overline\gamma_{2x-2},0,-\overline\gamma_{2x-4},\dots)$ respectively.

The main difference compared to the half-line case is that $f^{2x-1}(z)$ is not a rational inner function on the line.
Therefore, the return probability matrix to the site $x$ takes the form
\[
\begin{aligned}
 \bs{R}_x
 & = \begin{pmatrix}
 |\gamma_{2x}|^2\|f^{2x-1}\|^2+\rho_{2x}^2\|f_{2x+1}\|^2 & \rho_{2x}\gamma_{2x}(\|f^{2x-1}\|^2-\|f_{2x+1}\|^2)
 \\
 \rho_{2x}\overline\gamma_{2x}(\|f^{2x-1}\|^2-\|f_{2x+1}\|^2) & \rho_{2x}^2\|f^{2x-1}\|^2+|\gamma_{2x}|^2\|f_{2x+1}\|^2
 \end{pmatrix}
 \\
 & = \|f^{2x-1}\|^2 I_2 + (\|f_{2x+1}\|-\|f^{2x-1}\|^2) \bs\Gamma_x.
\end{aligned}
\]
The extreme values of the $V_x$-return probability are $\|f^{2x-1}\|^2$, $\|f_{2x+1}\|^2$ and they are again reached at the qubits $\psi_x^{(1)}$,
$\psi_x^{(2)}$ given in \eqref{eq:EXTREME}. The average over $V_x$ of the return probability to the site $x$ is
\[
 \dashint_{S_{V_x}} \<\psi|\bs{R}_x\psi\> \, d\psi = \frac{\|f^{2x-1}\|^2+\|f_{2x+1}\|^2}{2}.
\]

Site-recurrent qubits appear if and only if $f^{2x-1}(z)$ or $f_{2x+1}(z)$ are inner for some site $x$, and a site $x$ is recurrent exactly when
$f^{2x-1}(z)$ and $f_{2x+1}(z)$ are both inner. However, the fact that $f_{k+j}(z)$ and $f^{k-j}(z)$ are the $j$-th iterates of $f_k(z)$ and $f^k(z)$
respectively implies that the Schur functions $f_{2x+1}(z)$ are simultaneously inner or non-inner for all $x$, and analogously for $f^{2x-1}(z)$.
Hence, concerning site recurrence, all the sites exhibit the same characteristics:
either none of them has a site-recurrent qubit; or all of them have a single site-recurrent qubit; or all of them are recurrent sites.

Further, as on the half-line, the presence of a recurrent site implies not only the recurrence of any other site, but the recurrence of any state,
and hence of any subspace. The reason for this is that the spectral matrix measure $\bs\mu(d\lambda)$ of the subspace spanned by the basis states
$|0,\ua\>,\left|-1,\da\>\right.$ generates the measure of any other state through an expression of the form
$\bs{h}(\lambda)^\dag\bs\mu(d\lambda)\bs{h}(\lambda)$ for some vector function $\bs{h}(\lambda)\in L^2_\mu$, while the matrix Schur function of
$\bs\mu(d\lambda)$ is \cite{CGMV2,CGMV3}
\[
 \bs{f}(z) = \begin{pmatrix} 0 & f_0(z) \\ f^{-1}(z) & 0 \end{pmatrix}.
\]
Obviously $f_{2x+1}(z)$ and $f_0(z)$ are simultaneously inner or non-inner, as are $f^{2x-1}(z)$ and $f^{-1}(z)$. Hence, the presence of recurrent
sites is equivalent to $f_0(z)$ and $f^{-1}(z)$ both being inner, which at the same time is equivalent to
$\bs{f}(z)$ being inner. This means
that the matrix measure $\bs\mu(d\lambda)$ -- and thus the spectral measure of any state -- is purely singular, implying the recurrence of
any state.

A special situation holds when $\|f^{2x-1}\|=\|f_{2x+1}\|$, a condition that characterizes the sites $x$ with an effectively scalar site
return probability matrix
$\bs{R}_x=\|f_{2x+1}\|^2I_2$, i.e.~with the same site return probability for any qubit.
This holds for instance if the Schur parameters are
symmetric ($\gamma_{2x-2k}=e^{i(\eta+2k\omega)}\gamma_{2x+2k}$) or conjugated ($\gamma_{2x-2k}=e^{i(\eta+2k\omega)}\overline\gamma_{2x+2k}$)
with respect to the site $x$, up to a linear phase. In the former case, $f^{2x-1}(z)=-e^{-i\eta}\overline{f_{2x+1}}(e^{-i\omega}z)$, and in the
latter, $f^{2x-1}(z)=-e^{-i\eta}f_{2x+1}(e^{-i\omega}z)$, as follows from the Schur algorithm. In particular, for a constant coin on the line,
all the qubits of all the sites have the same site return probability since $f_{2x+1}(z)$ and $f^{2x-1}(z)=-\overline{f_{2x+1}}(z)$ are independent
of $x$.

Given a site-recurrent qubit $\psi_x=\alpha|x,\ua\>+\beta|x,\da\>$, its expected return time to the site $x$ is given by
\[
\begin{aligned}
 \tau(\psi_x) & = \lim_{r\to1^-}
 \begin{pmatrix} \overline\alpha & \overline\beta \end{pmatrix}
 \bs\tau_{\kern-1pt x,r}
 \begin{pmatrix} \alpha \\ \beta \end{pmatrix},
 \qquad
 \bs\tau_{\kern-1pt x,r} = I_2 + \bs{\delta\tau}_{\kern-1pt x,r},
 \\
 \bs{\delta\tau}_{\kern-1pt x,r}
 & = \int_0^{2\pi} \partial_\theta\bs{f}_{\kern-2pt x}(re^{i\theta}) \bs{f}_{\kern-2pt x}(re^{i\theta})^\dag \, \frac{d\theta}{2\pi}
 \\
 & = \begin{pmatrix}
 |\gamma_{2x}|^2 \, \tau_r(f^{2x-1}) + \rho_{2x}^2 \, \tau_r(f_{2x+1}) & \rho_{2x}\gamma_{2x} (\tau_r(f^{2x-1}) - \tau_r(f_{2x+1}))
 \\
 \rho_{2x}\overline\gamma_{2x} (\tau_r(f^{2x-1}) - \tau_r(f_{2x+1})) & \rho_{2x}^2 \, \tau_r(f^{2x-1}) + |\gamma_{2x}|^2 \, \tau_r(f_{2x+1})
 \end{pmatrix}
 \\
 & = \tau_r(f^{2x-1}) \, \bs\Gamma^x + \tau_r(f_{2x+1}) \, \bs\Gamma_x,
 \\
 & \bs\Gamma^x = \begin{pmatrix} |\gamma_{2x}|^2 & \rho_{2x}\gamma_{2x} \\ \rho_{2x}\overline\gamma_{2x} & \rho_{2x}^2 \end{pmatrix},
 \quad
 \bs\Gamma_x = \begin{pmatrix} \rho_{2x}^2 & -\rho_{2x}\gamma_{2x} \\ -\rho_{2x}\overline\gamma_{2x} & |\gamma_{2x}|^2 \end{pmatrix}.
\end{aligned}
\]
where $\tau_r$ was defined in (\ref{eq:tau:r}) above.
The matrices $\bs\Gamma_x$ and $\bs\Gamma^x$ are positive semidefinite and their kernels -- spanned by $\psi_x^{(1)}$ and $\psi_x^{(2)}$ respectively
-- are orthogonal to each other. Note that $\tau_r(\cdot)\ge0$, and $\lim_{r\to1^-} \tau_r(f_k) = \infty$ whenever $f_k(z)$
is inner because it has infinitely many Schur parameters, with a similar result for $f^k(z)$. As a consequence,
$\tau(\psi_x)=\infty$ for any site-recurrent qubit $\psi_x$. Thus, in
contrast to the case of the half-line, the expected return time to a site is infinite for any site-recurrent qubit on the whole line.

\medskip

We have seen that the analysis of site recurrence for any coined walk on a one-dimensional lattice can be reduced to the study of scalar Schur
functions, despite the fact that subspace recurrence is characterized by matrix Schur functions. This is due to the form of the site Schur function,
which is given in terms of `right' and `left' scalar Schur functions characterized by the Schur parameters of the coins located to the right and left
of the site under consideration, respectively. Moreover, a reflecting boundary condition at a left or right edge implies that the corresponding
scalar Schur function has a finite number of Schur parameters, and thus is rational inner. This results in the independence of the site return
probability with respect to the coins on the side of the site where the reflecting boundary condition appears. In any case, the return probability
matrix to a site only depends on the coin at that site and the $L^2$ norm of the right and left Schur functions.

These norms can be computed in some cases, providing a completely explicit expression for the site return probability matrix.
For instance, this is the case for a constant coin
\begin{equation} \label{eq:CC}
 C = \begin{pmatrix} \rho & -\gamma \\ \overline\gamma & \rho \end{pmatrix}, \qquad \rho=\sqrt{1-|\gamma|^2},
\end{equation}
which yields \cite{GVWW,CGMV3}
\[
 \|f_{2x+1}\|^2 = \|f^{2x-1}\|^2=\frac{2}{\pi|\gamma|^2}\left\{\rho|\gamma|+(1-2\rho^2)\arcsin|\gamma|\right\}
\]
for any site $x$, assuming no reflecting boundary conditions at the right or left edge, respectively; (otherwise
the related $L^2$ norm is one).

Finally, from the Taylor expansion around the origin of $f_{2x+1}(z)$ and $f^{2x-1}(z)$, we can obtain the Taylor coefficients of the site Schur
function $\bs{f}_{\kern-2pt x}(z)=\sum_{n\ge1}\bs{a}_{n,x}^\dag z^{n-1}$, which provide the $n$-step first return amplitude matrix $\bs{a}_{n,x}$ to
the site $x$. For instance, the constant coin \eqref{eq:CC} on the line yields $f_{2x+1}(z)=\sum_{n\ge1}\overline{c}_nz^n$ and
$f^{2x+1}(z)=-\sum_{n\ge1}c_nz^n$ with \cite{GVWW,CGMV3}
\[
 c_1=\overline\gamma; \quad c_{2n}=0, \quad c_{2n+1}=\frac{P_{n-1}(c)-cP_n(c)}{2\gamma(n+1)}, \quad c=1-2|\gamma|^2, \quad n\ge1,
\]
and $P_n(z)$ the Legendre polynomials. Hence,
\[
\begin{array}{c}
 \bs{a}_{x,2n-1}=0, \qquad \bs{a}_{x,2n} = d_n \Upsilon, \qquad n\ge1,
 \smallskip \\
 d_n = \begin{cases} 1, & n=1, \\ \frac{P_{n-2}(c)-cP_{n-1}(c)}{2|\gamma|^2n}, & n\ge2, \end{cases}
 \qquad
 \Upsilon = \begin{pmatrix} -|\gamma|^2 & -\rho\gamma \\ \rho\overline\gamma & -|\gamma|^2 \end{pmatrix}.
\end{array}
\]

Therefore, given two qubits $\psi_x,\phi_x$ at site $x$, the $n$-step amplitude for the transition $\psi_x \to \phi_x$ without hitting the site $x$
at intermediate steps is $d_n\<\phi_x|\Upsilon\psi_x\>$. The time and qubit dependence of these amplitudes enter through the scalar and matrix
factors $d_n$ and $\Upsilon$, respectively. $\Upsilon$ encodes the ratio between the $n$-step amplitudes (without hitting the site earlier) for
different pairs of qubits, which are independent of $n$ and the site. The time dependence, isolated in the scalar factor $d_n$, shows an $n^{-3/2}$
decay as $n\to\infty$ \cite{GVWW,CGMV3}.

The comparison between state and site recurrence for coined walks leads to interesting phenomena seen previously in
Example~\ref{ex:SHIFT+EIG-Schur}.  We can compute the state return probability of a qubit following steps similar to those in
Example~\ref{ex:SHIFT+EIG-Schur}. Indeed,
using the matrix Carath\'{e}odory function $\bs{F}_x(z)$ corresponding to the site $x$, obtained from the site Schur
function $\bs{f}_x(z)$, we can arrive at the scalar Carath\'{e}odory function
$\left(\begin{smallmatrix}\overline\alpha&\overline\beta\end{smallmatrix}\right) \bs{F}_x(z)
\left(\begin{smallmatrix}\alpha\cr\beta\end{smallmatrix}\right)$ of any qubit $\psi_x = \alpha|x,\ua\>+\beta|x,\da\>$. The Lebesgue $L^2$ norm of the
related scalar Schur function yields the return probability to the same qubit.

\begin{figure}
 \includegraphics[width=12cm,height=5cm]{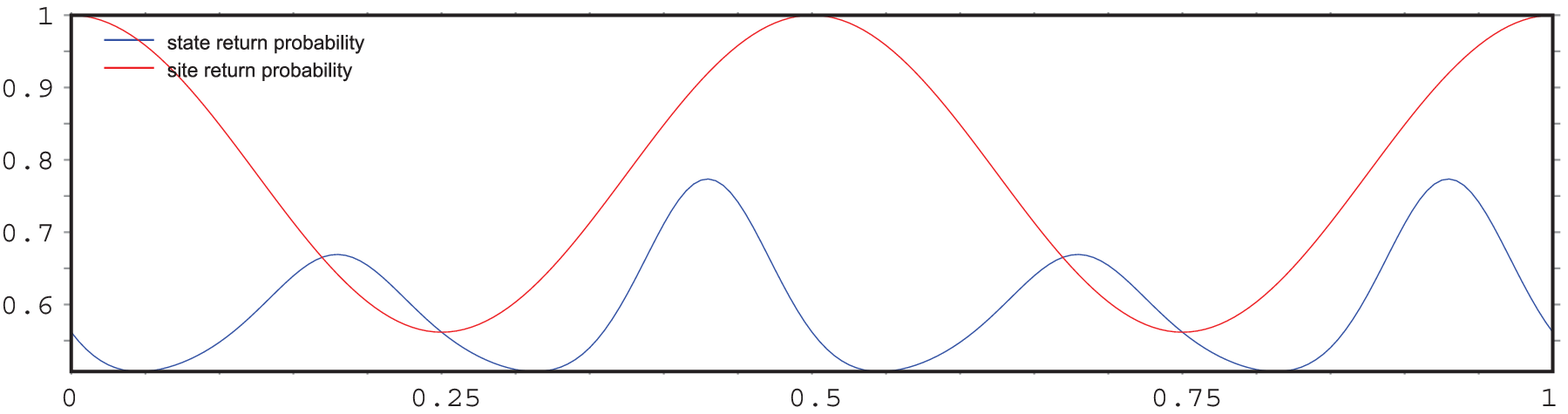}
 \includegraphics[width=12cm,height=5cm]{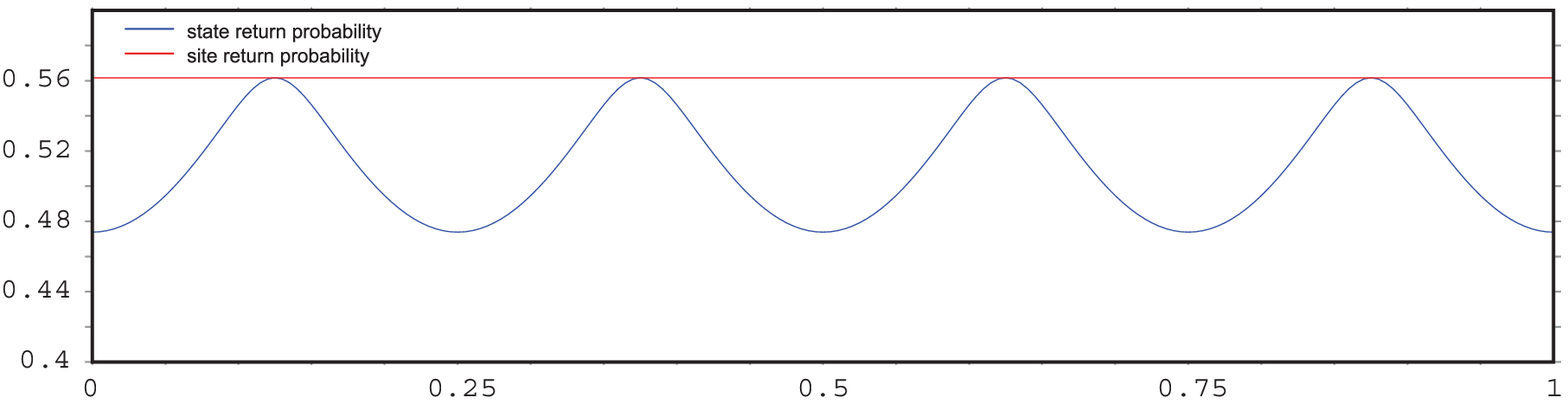}
 \caption{{\bf State versus site recurrence on the half-line and the line.}
 Given the qubits $\psi_x(t)=\cos t|x,\ua\>+\sin t|x,\da\>$, the upper figure represents as a function of $t$
 the state (blue) and site (red) return probabilities for a constant coin $\gamma_{2x}=\sqrt{3/5}$ on the half-line.
 The lower figure is a similar plot for the same coin on the line,
 but for the qubits $\psi_x(t)=\cos t\,|x,\ua\>+i\sin t\,|x,\da\>$.
  \label{fig:HL-L}}
\end{figure}

The upper plot in Figure~\ref{fig:HL-L} shows the state and site return probabilities for $\alpha=\cos t$ and $\beta=\sin t$ in the case of a
constant coin on the half-line given by $\gamma_{2x}=\sqrt{3/5}$. We see that this kind of coined walk exhibits the same
paradoxical behavior already
found in the toy model of Example~\ref{ex:SHIFT+EIG-Schur} since the state return probability is higher than the site return probability for some
values of $t$. By contrast, this effect is absent in the lower plot of Figure~\ref{fig:HL-L}, which represents the analogous situation on the line for
$\alpha=\cos t$ and $\beta=i\sin t$.

Indeed, for any constant coin $\gamma_{2x}=\gamma$ on the line, the above procedure for the computation of the state return probability for a qubit
$\psi_x=\alpha|x,\ua\>+\beta|x,\da\>$ yields $R_{\alpha,\beta}=\|f(z,0)f(z,c)\|^2$, where $f(z,\gamma_0)$ is the Schur function with Schur parameters
$(\gamma_0,0,\gamma,0,\gamma,0,\dots)$ and $c=\gamma-(2i\rho/\overline\gamma)\im(\overline\alpha\beta\gamma)$. On the other hand, our previous
results imply that the corresponding site return probability has the same value $\<\psi_x|\bs{R}_x\psi_x\>=\|f(z,0)\|^2$ for any qubit. Thus, for any
constant coin on the line, the state return probability of a qubit is never higher than its site return probability. Nevertheless, this can change if
we allow distinct coins on the line or if we consider absorbing subspaces not attached to a site.

\subsection{Site recurrence in two dimensions} \label{ssec:SITE-2D}

In this section we show some numerical results and display a few figures illustrating the behavior of
return probabilities starting from different states in a few types of two-dimensional coined walks.

We first consider the usual two-dimensional doubly-infinite square lattice $\Lambda=\mathbb{Z}^2$ with four internal degrees of freedom for each site
(east, north, west and south).  The Hilbert state space $\ell^2(\mathbb{Z}^2) \otimes \mathbb{C}^4$ is spanned by a basis of the form $|x,\ra\>$,
$|x,\ua\>$, $|x,\la\>$, $|x,\da\>$, $x\in\mathbb{Z}^2$. The unitary step $U=SC$ factorizes again into a shift $S|x,\ra\>$ $=|x+e_1,\ra\>$,
$S|x,\ua\>=|x+e_2,\ua\>$, $S|x,\la\>=|x-e_1,\la\>$, $S|x,\da\>=|x-e_2,\da\>$, with $e_1=(1,0)$ and $e_2=(0,1)$, and a coin operation $C$ that acts
independently on each site subspace $V_x=\spn\{|x,\ra\>,|x,\ua\>,|x,\la\>,|x,\da\>\}$ as a four-dimensional unitary operator.

The walks we consider are the well known Grover walk and the Fourier walk, defined respectively by the following constant coins in the site basis,
\[
 C_G^4 = \frac{1}{2}
       \begin{pmatrix}
       -1 & 1 & 1 & 1
       \\
       1 & -1 & 1 & 1
       \\
       1 & 1 & -1 & 1
       \\
       1 & 1 & 1 & -1
       \end{pmatrix},
 \qquad
 C_F^4 = \frac{1}{2}
       \begin{pmatrix}
       1 & 1 & 1 & 1
       \\
       1 & i & -1 & -i
       \\
       1 & -1 & 1 & -1
       \\
       1 & -i & -1 & i
       \end{pmatrix}.
\]

The method we use to produce these results is a classical one. It goes by the name of the ``transfer matrix method,'' and our best
references for
this material are the articles \cite{BP,BGPP} and the classical book \cite{STANLEY}. The main point is that once the four-dimensional unitary coin is
specified, one has a way of obtaining the four-dimensional return amplitudes $\bs\mu_n$ for any site.
The methods of this paper then allow one to compute many other quantities of interest.

\begin{figure}
 \includegraphics[width=12cm,height=5cm]{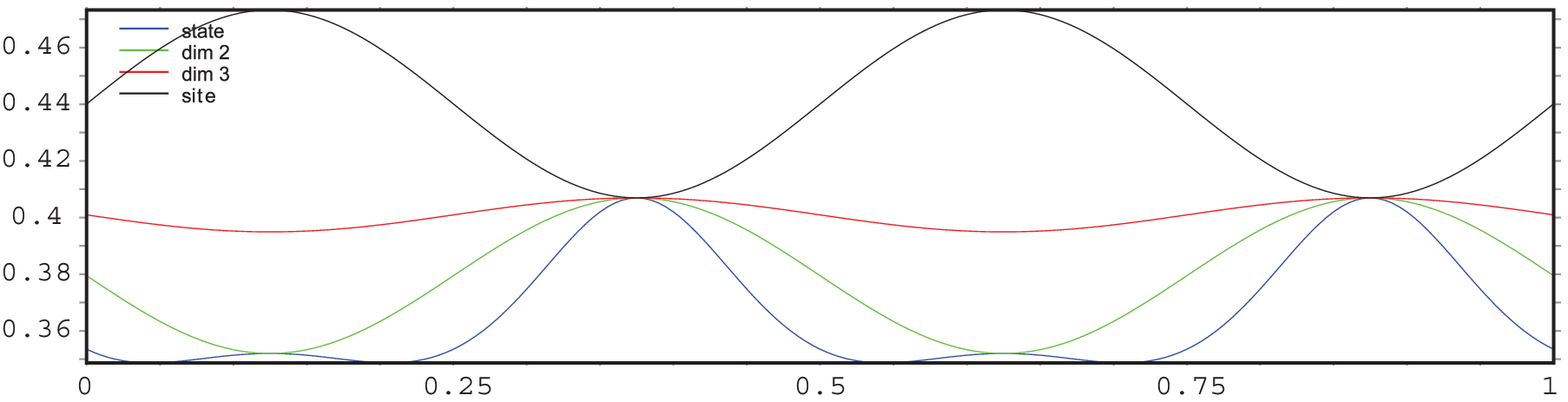}
 \includegraphics[width=12cm,height=5cm]{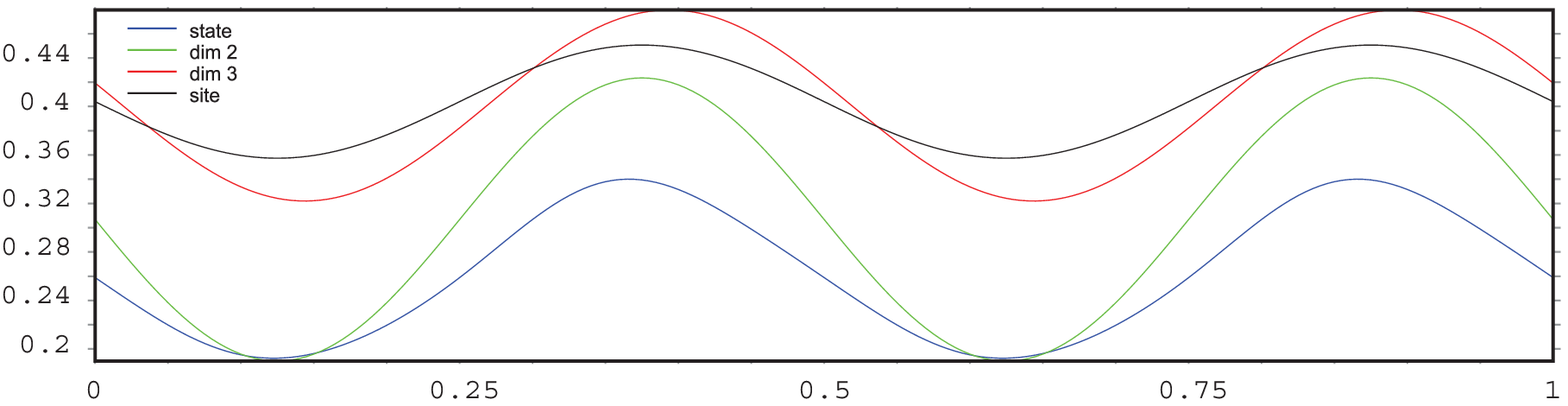}
 \includegraphics[width=12cm,height=5cm]{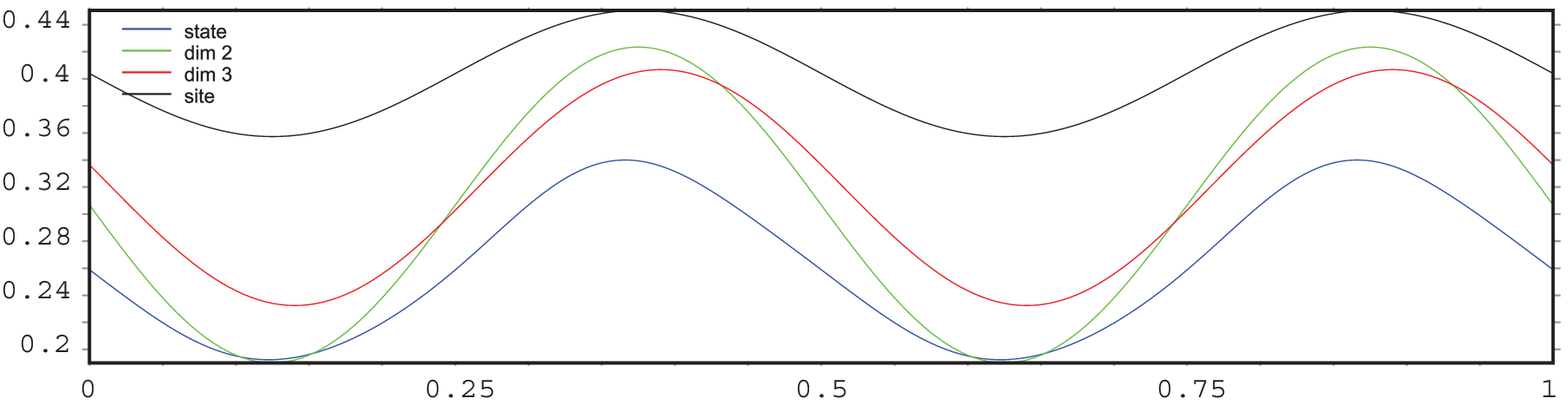}
 \caption{{\bf State versus site recurrence on a square lattice.}
 For a choice of qubits $\phi_x^1,\phi_x^2,\phi_x^3$ at site $x$,
 the figures show as a function of $t$ the state (blue) and site (black) return probability of $\psi_x(t)=\cos t\,\phi_x^1+\sin t\,\phi_x^2$,
 together with the return probability to the 2-dimensional subspace $\spn\{\phi_x^1,\phi_x^2\}$ (green)
 and the 3-dimensional subspace $\spn\{\phi_x^1,\phi_x^2,\phi_x^3\}$ (red).
 The upper figure represents the case of the Grover coin $C_G^4$ with
 $\phi_x^1=|x,\ra\>$, $\phi_x^2=|x,\la\>$, $\phi_x^3=\frac{1}{\sqrt{2}}(|x,\ua\>+i|x,\da\>)$.
 Curiously, the change $\phi_x^3=\frac{1}{\sqrt{2}}(|x,\ua\>+|x,\da\>)$ yields a 3-dimensional subspace
 with the same return probability plot as the whole site.
 The other two figures correspond to the Fourier coin $C_F^4$
 with the choice $\phi_x^1=|x,\ra\>$, $\phi_x^2=|x,\ua\>$, $\phi_x^3=\frac{1}{\sqrt{2}}(|x,\la\>+|x,\da\>)$ (center)
 and $\phi_x^3=\frac{1}{\sqrt{2}}(|x,\la\>+i|x,\da\>)$ (bottom).
}
 \label{fig:SQUARE}
\end{figure}

Figure~\ref{fig:SQUARE} compares the state and subspace return probabilities computed on a curve of qubits lying in the same site. Each figure
represents the return probabilities to several nested absorbing subspaces of the site subspace to reveal the possible exotic behavior of the return
probability. Although the Grover example shows a classical monotone behavior, it conceals an interesting effect because the return probability to the
four-dimensional site subspace coincides with the return probability to one of its three-dimensional subspaces. Furthermore, the two examples of the
Fourier walk show that the return probability is not a monotone function with respect to the inclusion relation among absorbing subspaces.

To see the effect of the geometry of the lattice in the return probabilities we also present in Figure~\ref{fig:HEXA} a similar comparison for a
couple of coined walks in a two-dimensional hexagonal lattice already used by other authors for different purposes \cite{KOLLAR}. In this
case each site has three internal degrees of freedom that control a conditional shift $S$ to the three nearest neighbors
analogously to the case of the square lattice. A three-dimensional unitary operator at each site defines a coin operation $C$ giving the unitary step
$U=SC$. The examples represented in Figure~\ref{fig:HEXA} correspond to the constant coins given in the site basis by the three-dimensional versions
of Grover and Fourier,
\[
 C_G^3 = \frac{1}{3}
       \begin{pmatrix}
       -1 & 2 & 2
       \\
       2 & -1 & 2
       \\
       2 & 2 & -1
       \end{pmatrix},
 \qquad
 C_F^3 = \frac{1}{\sqrt{3}}
       \begin{pmatrix}
       1 & 1 & 1
       \\
       1 & e^{i2\pi/3} & e^{-i2\pi/3}
       \\
       1 & e^{-i2\pi/3} & e^{i2\pi/3}
       \end{pmatrix}.
\]
These figures show that the exotic dependence of the return probability on the absorbing subspace holds for two-dimensional coined walks with a
constant coin for many lattice geometries. Indeed, the plots of Figure~\ref{fig:HEXA} show more dramatic phenomena than the previous ones.

\begin{figure}
 \includegraphics[width=12cm,height=5cm]{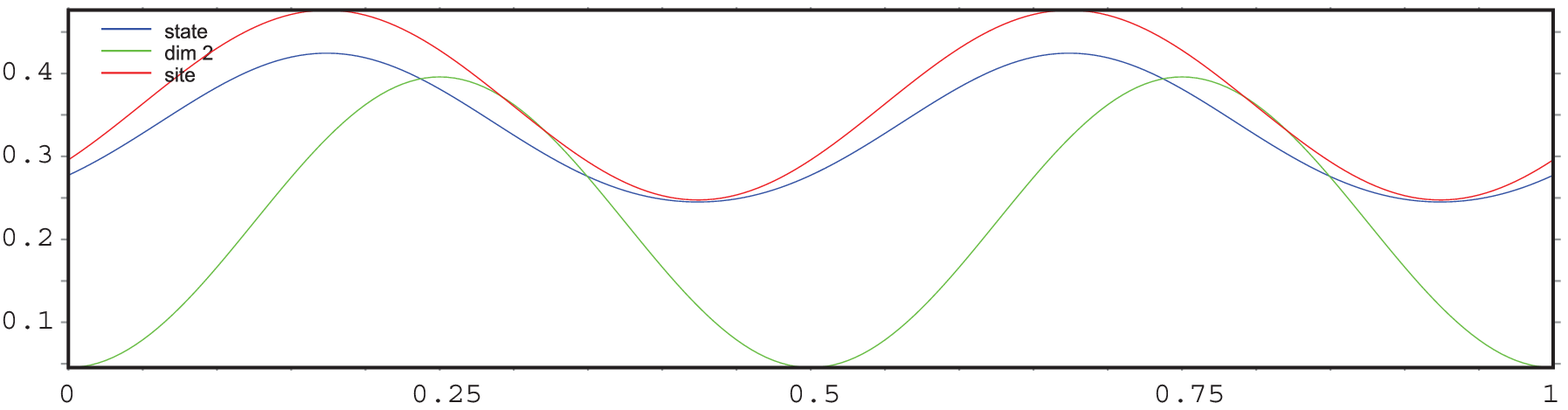}
 \includegraphics[width=12cm,height=5cm]{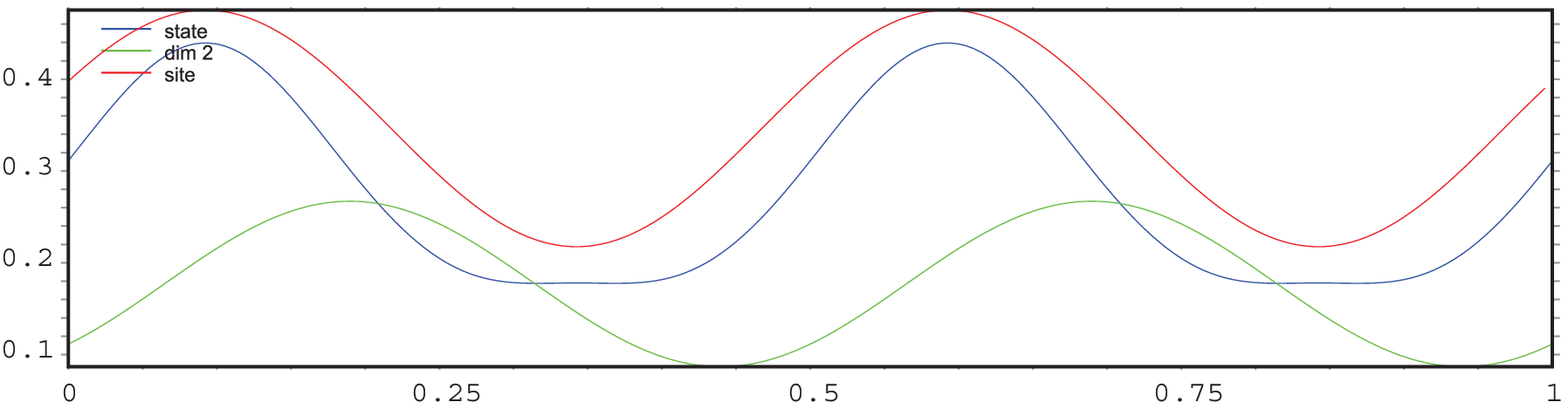}
 \caption{{\bf State versus site recurrence on a hexagonal lattice.}
 Different return probabilities as a function of $t$ for the qubits given in the site basis by
 $\psi_x(t)=\frac{1}{\sqrt{2}}\cos t\,(1,0,\frac{1+i}{\sqrt{2}})+i\sin t\,(0,1,0)$.
 Together with the state (blue) and site (red) return probability,
 we present the case of the 2-dimensional absorbing subspace
 $\spn\{\frac{1}{\sqrt{2}}(1,0,\frac{1+i}{\sqrt{2}}),(0,1,0)\}$ (green).
 The upper and lower figure correspond to the Grover coin, $C_G^3$, and the Fourier coin, $C_F^3$,
 respectively.
}
 \label{fig:HEXA}
\end{figure}

Each of these two-dimensional coined walks has a constant coin and no boundary conditions; thus,
the site return probability operator $\bs{R}=\bs{R}_x$
is independent of the site $x$. The computation of its eigenvalues reveals a curious doubling effect: there is always
an eigenvalue with multiplicity two.

\medskip

\begin{center}
\begin{tabular}{|l|c|l|}
 \hline
 \bf Lattice & \bf Coin & \bf Approximate eigenvalues of $\bs{R}$
 \\
 \hline
 Square & $C_G^4$ & 0.6593, 0.4069, 0.4069, 0.2878
 \\
 \hline
 Square & $C_F^4$ & 0.5517, 0.3882, 0.3882, 0.2880
 \\
 \hline
 Hexagonal & $C_G^3$ & 0.8017, 0.2411, 0.2411
 \\
 \hline
 Hexagonal & $C_0^3$ & 0.6365, 0.6365, 0.5462
 \\
 \hline
\end{tabular}
\end{center}

\medskip

\noindent Here $C_0^3$ is the three-dimensional coin
\[
 C_0^3 = \frac{1}{\sqrt{2}}
       \begin{pmatrix}
       0 & 0 & \sqrt{2}
       \\
       1 & 1 & 0
       \\
       1 & -1 & 0
       \end{pmatrix}.
\]

\noindent This points to a general property whose explanation and implications deserve future research.

\appendix

\section{The average of the expected $V$-return time} \label{app:TAU-AVERAGE}

\begin{lem} \label{lem:SURVIVAL-1}
If $\dim\mathcal{H}<\infty$ and $U$ has no eigenvectors in $V^\bot$, the eigenvalues of $\tilde{U}=(I-P)U$ and $A=(I-P)U(I-P)$ lie
inside the unit disk. Hence, $\|\tilde{U}^n\|_F$ and $\|A^n\|_F$ converge to zero exponentially fast as $n\to\infty$.
\end{lem}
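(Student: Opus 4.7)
My plan is to argue by contrapositive: show that any eigenvalue of $\tilde U$ or $A$ on the unit circle must produce an eigenvector of $U$ lying in $V^\bot$, contradicting the hypothesis. Exponential decay of the Frobenius norms is then an immediate consequence of the spectral radius being strictly less than one in finite dimension.

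First I would record the easy bound $\|\tilde U\|\le\|I-P\|\,\|U\|\le 1$ and likewise $\|A\|\le 1$, so the spectra of $\tilde U$ and $A$ are contained in the closed unit disk. The task is therefore to exclude eigenvalues of unit modulus. Suppose $\tilde U\psi=\lambda\psi$ with $\|\psi\|=1$ and $|\lambda|=1$. Because $\tilde U=(I-P)U$ has range in $V^\bot$, the equation forces $\lambda\psi\in V^\bot$, hence $\psi\in V^\bot$. Then $\|\tilde U\psi\|=|\lambda|=1=\|U\psi\|$ (here using the unitarity of $U$), and the orthogonal decomposition $U\psi=PU\psi+(I-P)U\psi$ combined with the Pythagorean identity gives $PU\psi=0$. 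Therefore $U\psi=(I-P)U\psi=\lambda\psi$, exhibiting $\psi\in V^\bot$ as an eigenvector of $U$, which is the contradiction.

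The argument for $A=(I-P)U(I-P)$ reduces to the one above. $A$ vanishes on $V$, so any nonzero eigenvalue comes from $\psi\in V^\bot$; on $V^\bot$, $(I-P)$ acts as the identity, and so $A\psi=\tilde U\psi$. Thus a unit-modulus eigenvalue of $A$ is also one for $\tilde U$, and the previous paragraph applies. (The eigenvalue $0$ may occur for $A$, but it is of course inside the open disk.)

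Finally, since $\mathcal{H}$ is finite-dimensional, the spectral radius $\rho(\tilde U)$ equals the maximum modulus of an eigenvalue; the step above shows $\rho(\tilde U)<1$, and similarly $\rho(A)<1$. By Gelfand's formula (or, equivalently, by placing $\tilde U$ in Jordan form) there exist $C>0$ and $r\in(\max\{\rho(\tilde U),\rho(A)\},1)$ such that $\|\tilde U^n\|_F\le C r^n$ and $\|A^n\|_F\le C r^n$ for all $n$, giving the asserted exponential decay. The only step with any subtlety is the passage from $|\lambda|=1$ to $PU\psi=0$ via the unitarity of $U$; once this observation is in place, everything else is routine finite-dimensional linear algebra.
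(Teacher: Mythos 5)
Your proof is correct and follows essentially the same route as the paper's: both hinge on the observation that unitarity of $U$ forces $\|(I-P)U\psi\|<\|U\psi\|=\|\psi\|$ unless $PU\psi=0$, in which case $\psi$ becomes an eigenvector of $U$ in $V^\bot$; the paper argues this directly while you take the contrapositive, and the reduction of $A$ to $\tilde U$ and the Jordan-form/spectral-radius conclusion are the same in both.
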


\begin{proof}
Let $\psi\in\mathcal{H}\setminus\{0\}$ such that $\tilde{U}\psi=\lambda\psi$, $\lambda\in\mathbb{C}$. If $U\psi\in V^\bot$, then
$U\psi=\tilde{U}\psi=\lambda\psi$. Hence, if no eigenvector of $U$ belongs to $V^\bot$, we have $U\psi\notin V^\bot$.  But then
$|\lambda|\|\psi\|=\|\tilde{U}\psi\|<\|U\psi\|=\|\psi\|$, so $|\lambda|<1$.

Consider now $\phi\in\mathcal{H}\setminus\{0\}$ such that $A\phi=\lambda\phi$, $\lambda\in\mathbb{C}$. Then, $\psi=(I-P)\phi$ satisfies
$\tilde{U}\psi=(I-P)A\phi=\lambda\psi$. If $\psi=0$, then $A\phi=0$, i.e. $\lambda=0$. Otherwise, we are in the previous situation that leads to
$|\lambda|<1$.

The asymptotics of $\|\tilde{U}^n\|_F$ and $\|A^n\|_F$ follow directly from the previous results and the Jordan form of $\tilde{U}$ and $A$.
\end{proof}

A finite average of the expected $V$-return time requires a finitely supported spectral measure $\bs\mu(d\lambda)$ of $V$. We know that the mass
points of $\bs\mu(d\lambda)$ must be eigenvalues of $U$ and the mass at $\lambda$ is $\bs\mu(\{\lambda\})=PE(\{\lambda\})P$,
where $E(\{\lambda\})$ is the orthogonal projector onto the eigenspace associated with $\lambda$.

\begin{thm} \label{thm:TAU-MASS}
If $V$ has a finitely supported spectral measure $\bs\mu(d\lambda)$, then $\dashint_{S_V} \tau(\psi)\,d\psi = K/\dim V$ with $K$
a positive integer that can be computed equivalently as
\[
 K = \sum_k \dim(E_kV) = \sum_k \rank\bs\mu(\{\lambda_k\}),
\]
where $\{\lambda_k\}$ are the mass points of $\bs\mu(d\lambda)$ and $E_k=E(\{\lambda_k\})$ are
the orthogonal projectors onto the corresponding eigenspaces of $U$.
\end{thm}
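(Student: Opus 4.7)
The plan is to compute the averaged expected return time $\dashint_{S_V}\tau(\psi)\,d\psi$ by a trace identity and then combine it with the formula $\dashint\tau = K/\dim V$ supplied by the main theorem to read off $K$. The equality $\rank\bs\mu(\{\lambda_k\}) = \dim E_kV$ is immediate: $\bs\mu(\{\lambda_k\}) = PE_kP = (E_kP)^*(E_kP)$, whose rank equals $\rank(E_kP)$; since the range of $E_kP$ is precisely $E_kV$, we get $\rank\bs\mu(\{\lambda_k\}) = \dim E_kV$. Summing and using orthogonality of the eigenspaces $E_kV$ for distinct $\lambda_k$, both sums coincide with $N := \dim\mc{H}_0$, where $\mc{H}_0 := \bigoplus_k E_kV$ is the minimal $U$-invariant subspace containing $V$.

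Next I reduce to the finite-dimensional case $\mc{H} = \mc{H}_0$. Since $V \subseteq \mc{H}_0$ and $U\mc{H}_0 \subseteq \mc{H}_0$, the operator $\tilde{U} = (I-P)U$ also preserves $\mc{H}_0$, so the amplitudes $\bs{a}_n = PU\tilde{U}^{n-1}P$ are the same whether computed in $\mc{H}$ or in $\mc{H}_0$. After the reduction $U|_{\mc{H}_0}$ is unitary, and $U$ has no eigenvector in $V^\perp$ within $\mc{H}_0$: any such eigenvector at $\lambda_k$ must be of the form $\psi = E_k v_0$ with $v_0 \in V$, and then orthogonality to $v_0$ forces $\<v_0|\psi\> = \|\psi\|^2 = 0$. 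Hence Lemma \ref{lem:SURVIVAL-1} yields $\|\tilde{U}^n\|_F \to 0$ exponentially, and the identity \eqref{eq:an-sn} together with $\lim_N N\|\tilde{U}^N\psi\|^2 = 0$ gives
\[
 \dashint_{S_V}\tau(\psi)\,d\psi = \frac{1}{\dim V}\sum_{n\geq 1}n\|\bs{a}_n\|_F^2 = \frac{1}{\dim V}\sum_{n\geq 0}\|\tilde{U}^nP\|_F^2 = \frac{\tr(P\Lambda P)}{\dim V},
\]
where $\Lambda := \sum_{n\geq 0}\tilde{U}^{*n}\tilde{U}^n$ is the (finite) positive solution of the Lyapunov relation $\Lambda - \tilde{U}^*\Lambda\tilde{U} = I$.

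The decisive observation is that $\tilde{U}\tilde{U}^* = (I-P)UU^*(I-P) = I-P$, exploiting $UU^* = I$ on $\mc{H}_0$. Taking the trace of the Lyapunov equation and using cyclicity,
\[
 N = \tr I = \tr\Lambda - \tr(\tilde{U}^*\Lambda\tilde{U}) = \tr\Lambda - \tr\bigl(\Lambda(I-P)\bigr) = \tr(\Lambda P) = \tr(P\Lambda P).
\]
Therefore $\dashint_{S_V}\tau(\psi)\,d\psi = N/\dim V$, and comparison with the main theorem forces $K = N = \sum_k \dim E_kV = \sum_k \rank\bs\mu(\{\lambda_k\})$. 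The only real obstacle is verifying that the reduction to $\mc{H}_0$ eliminates all $U$-eigenvectors from $V^\perp$ so that Lemma \ref{lem:SURVIVAL-1} applies and $\Lambda$ converges; once this is secured, the identity $\tilde{U}\tilde{U}^* = I-P$ collapses the Lyapunov equation to the desired trace formula in a single line.
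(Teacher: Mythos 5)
Your proof is correct and follows essentially the same route as the paper's: the same reduction to the minimal $U$-invariant subspace $\mathcal{H}_0=\oplus_k E_kV$, the same verification that $U\upharpoonright\mathcal{H}_0$ has no eigenvectors in $V^\bot$ so that Lemma \ref{lem:SURVIVAL-1} applies, and the same identification of the numerator as $K=\sum_{n\ge0}\|\tilde{U}^nP\|_F^2$. The only difference is in how that sum is evaluated: where the paper splits $\tilde{U}^nP=A^{n-1}B$ and telescopes via $AA^\dag+BB^\dag=I-P$, you package the series as the Lyapunov solution $\Lambda$ and take a single trace using $\tilde{U}\tilde{U}^\dag=I-P$ --- the same underlying co-isometry identity, so the two computations are equivalent.
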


\begin{proof}
Under the hypothesis, $V$ lies on a finite-dimensional $U$-invariant subspace $\mathcal{H}_0=\oplus_k E_kV$. Due to the unitarity of $U$, the
orthogonal subspace $\mathcal{H}_1=\mathcal{H}_0^\bot$ is also $U$-invariant and the unitary evolution splits as $U = U_0 \oplus U_1$ with $U_i = U
\upharpoonright \mathcal{H}_i$. Therefore, to compute the average of the expected $V$-return time, we can use the operator $U_0$ instead of $U$
since no state of $V$ escapes from $\mathcal{H}_0$ during the evolution. In other words, $\bs{a}_n^{U_0}=\bs{a}_n^U$ as operators on $V$.

So, in what follows we can assume without loss that $\mathcal{H}=\oplus_k E_kV$. Then, the eigenspaces of $\mathcal{H}$ are $E_kV$, which implies
that $U$ has no eigenvectors in $V^\bot$. Indeed, any eigenvector will have the form $\psi=E_k\phi$, $\phi\in V$, and the orthogonality requirement
$\psi\in V^\bot$ forces $\|\psi\|^2=\<\phi|E_k\phi\>=\<\phi|\psi\>=0$. This means that we can use the results of the previous lemma.

We know that $K=\sum_{n\ge1}n\|\bs{a}_n\|_F^2$. Since $\bs{a}_n=U\tilde{U}^{n-1}P-\tilde{U}^nP$, we have
$\|\bs{a}_n\|_F^2=\|\tilde{U}^{n-1}P\|_F^2-\|\tilde{U}^nP\|_F^2$. These identities combined with
$\lim_{n\to\infty}n\|\tilde{U}^n\|_F^2=0$ from the lemma, lead to $K = \sum_{n\ge0}\|\tilde{U}^nP\|_F^2$, which can also be expressed as
\[
 K = \|P\|_F^2 + \sum_{n\ge0}\|A^nB\|_F^2, \kern7pt A=(I-P)U(I-P), \kern7pt B=(I-P)UP.
\]
Using the equality $AA^\dag+BB^\dag=I-P$, we can write
\[
 \|A^nB\|_F^2 = \tr(A^n B B^\dag (A^\dag)^n) = \|A^n\|_F^2-\|A^{n+1}\|_F^2, \quad A^0=I-P.
\]
Thus, $\sum_{n\ge0}\|A^nB\|_F^2$ becomes a telescoping series whose convergence follows from the
lemma, giving finally
\[
 K = \|P\|_F^2 + \|I-P\|_F^2 = \dim\mathcal{H} = \sum_k\dim(E_kV).
\]
To finish the proof simply note that $\ker P \cap E_kV = V^\bot \cap E_kV = \{0\}$. This proves that
$\dim(E_kV)=\dim(PE_kV)=\rank(PE_kP)=\rank\bs\mu(\{\lambda_k\})$.
\end{proof}

The subspace $\mathcal{H}_0=\oplus_kE_kV$ appearing in the proof of this theorem is the minimal $U$-invariant subspace
containing $V$. The numerator $K$ is its dimension.

In the case of a finite-dimensional state space $\mathcal{H}$ we can also express $K=\dim\mathcal{H}-\nu$, where $\nu$ is the number of linearly
independent eigenvectors of $U$ in $V^\bot$. To see this, note that we can consider the points $\lambda_k$ running over all the eigenvalues of $U$ --
i.e., over all the mass points of $E(d\lambda)$ -- bearing in mind that $\bs\mu(\{\lambda_k\})=PE_kP$ will vanish when $\lambda_k$ is not a mass
point of $\bs\mu(d\lambda)$. Then, denoting by $V_k=E_k\mathcal{H}$ the eigenspace of $U$ associated with $\lambda_k$, it is clear that
\[
 \nu = \sum_k \dim(V_k \cap V^\bot).
\]
A simple calculation shows that $(E_kV)^{\bot V_k} = V_k \cap V^\bot$, which means that $V_k = E_kV \oplus (V_k \cap V^\bot)$ is an orthogonal
decomposition of the eigenspace $V_k$. Hence, $\dim V_k = \dim E_kV + \dim(V_k \cap V^\bot)$. If $\dim \mathcal{H} < \infty$ the eigenvectors of the
unitary operator $U$ span $\mathcal{H}$; thus, $\dim \mathcal{H} = \sum_k \dim V_k$ and
\[
 K = \sum_k \dim(E_kV) = \dim \mathcal{H} - \nu.
\]

The reason we can use Lemma \ref{lem:SURVIVAL-1} in the proof of the previous theorem -- even if $\dim\mathcal{H}=\infty$ -- is that
the lemma can be applied to the minimal $U$-invariant subspace $\mathcal{H}_0$ containing $V$ whenever $\dim\mathcal{H}_0<\infty$,
i.e.~if all the states of $V$ have a finite expected $V$-return time.
In that case, the unitary operator $U_0 = U
\upharpoonright \mathcal{H}_0$ and the orthogonal projector $P_0$ onto $\mathcal{H}_0$ satisfy
\[
  \tilde{U}^nP_0=((P_0-P)U_0)^n=\tilde{U}_0^n \quad \text{and} \quad
  A^nP_0=((P_0-P)U_0(P_0-P))^n=A_0^n.
\]
Thus, we find an exponentially fast convergence $\|\tilde{U}^nP_0\|_F,\|A^nP_0\|_F\to0$, which implies that
$\|\tilde{U}^nP\|_F\to0$ exponentially fast as well. Nevertheless, Lemma \ref{lem:SURVIVAL-1} does not ensure these asymptotics for
states with a finite expected $V$-return time that do not cover all of $S_V$. Even in this case, the following
result shows that a weaker asymptotic result holds.

\begin{lem} \label{lem:SURVIVAL-2}
Any $V$-recurrent state $\psi$ with $\tau(\psi)<\infty$ satisfies
\[
 \lim_{n\to\infty}n\|\tilde{U}^n\psi\|^2=0.
\]
\end{lem}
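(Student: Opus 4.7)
The plan is to reduce the claim to a standard Abel-summation argument about monotone sequences.

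First I would abbreviate $s_n = \|\tilde U^n \psi\|^2$. By equation \eqref{eq:an-sn}, $\|\bs a_n\psi\|^2 = s_{n-1}-s_n \ge 0$, so $(s_n)$ is non-increasing and bounded below by $0$. Since $\psi$ is $V$-recurrent, $R(\psi)=1$, and from the identity $R(\psi)=1-\lim_n s_n$ derived right after \eqref{eq:an-sn} we get $s_n \downarrow 0$. Consequently, the telescoping identity
\[
 s_n = \sum_{k\ge n}(s_k-s_{k+1}) = \sum_{k\ge n}\|\bs a_{k+1}\psi\|^2
\]
holds as a convergent series.

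Next I would translate the hypothesis $\tau(\psi)<\infty$ into $\sum_{k\ge 0}(k+1)\|\bs a_{k+1}\psi\|^2 < \infty$, i.e.\ the series $\sum_{k\ge 0}(k+1)(s_k-s_{k+1})$ converges. Now comes the main (and only) step, a direct comparison: for every $k\ge n$ we have $n\le k+1$, so
\[
 n s_n = n\sum_{k\ge n}(s_k-s_{k+1}) \;\le\; \sum_{k\ge n}(k+1)(s_k-s_{k+1}).
\]
The right-hand side is a tail of a convergent series, hence tends to $0$ as $n\to\infty$. Combined with $n s_n\ge 0$, this yields $\lim_n n\,s_n=0$, which is the conclusion.

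There is no substantive obstacle here: the argument is a purely elementary Abel/summation-by-parts estimate, and nothing beyond \eqref{eq:an-sn}, the monotonicity of $s_n$, and $V$-recurrence (to guarantee $s_n\to 0$, without which one would only obtain $n(s_n-\lim s_n)\to 0$) is used. The only point worth being careful about is remembering that $V$-recurrence is used precisely to pass from the convergent tail $\sum_{k\ge n}(s_k-s_{k+1})$ back to $s_n$ itself.
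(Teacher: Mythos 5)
Your proof is correct, and it takes a genuinely different (and more economical) route than the paper's. The paper first uses Abel summation to write $\tau=\lim_{n}\bigl(\sum_{k=0}^{n-1}s_k-ns_n\bigr)$, then invokes the characterization of $V$-recurrence via the divergence of $\sum_{n\ge1}\bigl(1-\tfrac{s_n}{s_{n-1}}\bigr)$ to rule out $\sum_n s_n=\infty$ by contradiction (if the partial sums diverged, $ns_n\to\infty$ would force $\tau\ge\sum_{n\ge n_0}(1-s_n/s_{n-1})=\infty$), concludes that $ns_n\to\tau'-\tau$ with $\tau'=\sum_n s_n<\infty$, and finally kills this limit by the asymptotic $s_n\sim(\tau'-\tau)/n$ being incompatible with $\sum_n s_n<\infty$. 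You instead telescope $s_n=\sum_{k\ge n}(s_k-s_{k+1})$ (which, as you correctly flag, is exactly where $V$-recurrence, i.e.\ $s_n\to0$, is needed) and bound $ns_n$ by the tail $\sum_{k\ge n}(k+1)(s_k-s_{k+1})$ of the convergent series defining $\tau$. This is the standard one-line tail estimate showing that a non-negative, non-increasing, null sequence with $\sum_n n(s_{n-1}-s_n)<\infty$ satisfies $ns_n\to0$; it avoids both the product/sum recurrence criterion and the contradiction arguments, and as a byproduct (combined with the same Abel summation the paper uses in Section~\ref{sec:EXP-TIME}) it immediately yields the identity $\tau(\psi)=\sum_{n\ge0}s_n$ that the lemma is designed to justify. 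Both proofs are sound; yours is shorter and uses strictly fewer ingredients.
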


\begin{proof}
For any $\psi \in S_V$, the identity $\|\bs{a}_n\psi\|^2=\|\tilde{U}^{n-1}\psi\|^2-\|\tilde{U}^n\psi\|^2$ yields
\[
 \tau=\tau(\psi)=\sum_{n\geq1}n(s_{n-1}-s_n)=\lim_{n\to\infty}\left(\sum_{k=0}^{n-1}s_k-ns_n\right), \quad s_n=\|\tilde{U}^n\psi\|^2.
\]

On the other hand, the $V$-recurrence of $\psi$ can be characterized in terms of the non-increasing sequence $s_n$ of $V$-survival probabilities in
any of the following equivalent ways
\[
 s_n\to0
 \quad \Leftrightarrow \quad
 \prod_{n\geq1}\frac{s_n}{s_{n-1}}=0
 \quad \Leftrightarrow \quad
 \sum_{n\geq1}\left(1-\frac{s_n}{s_{n-1}}\right)=\infty.
\]

Therefore, if $\psi$ is $V$-recurrent and $\tau<\infty$ the series $\sum_{n\geq0}s_n$ must converge. Otherwise, $ns_n\to\infty$ and thus, for some
index $n_0$,
\[
 \tau = \sum_{n\geq1} ns_{n-1} \left(1-\frac{s_n}{s_{n-1}}\right) \geq
 \sum_{n\geq n_0} \left(1-\frac{s_n}{s_{n-1}}\right) = \infty.
\]
Hence, $\sum_{n\geq0}s_n=\tau'<\infty$ and $ns_n \to \tau'-\tau$. Finally, $\tau'-\tau\neq0$ leads to $s_n \sim (\tau'-\tau)/n$, in
contradiction with the convergence of $\sum_{n\geq0}s_n$.
\end{proof}

\section{Site Schur functions for one-dimensional coined walks} \label{app:SITE-SCHUR}

The purpose of this appendix is to calculate the Schur function $\bs{f}_{\kern-2pt x}(z)$ of a site subspace $V_x=\{|x,\ua\>,|x,\da\>\}$ for a
one-dimensional coined walk with arbitrary coins. We will perform this calculation first for a semi-infinite lattice, extending it afterwards to a
finite lattice and finally to a doubly-infinite lattice.

On the half-line, $\bs{f}_{\kern-2pt x}(z)$ is the Schur function of the matrix measure
\[
 \bs\mu_x(d\lambda) =
 \begin{pmatrix}
 |X_{2x}(\lambda)|^2 & \overline{X_{2x}(\lambda)} X_{2x+1}(\lambda)
 \\
 X_{2x}(\lambda) \overline{X_{2x+1}(\lambda)} & |X_{2x+1}(\lambda)|^2
 \end{pmatrix}
 \mu(d\lambda),
\]
where $X_k(z)$ are the orthogonal Laurent polynomials generated by the CMV matrix with Verblunsky parameters $(\gamma_0,0,\gamma_2,0,\dots)$ and
$\mu(d\lambda)$ is the corresponding orthogonality measure.

The above matrix measure can be rewritten in terms of the orthogonal polynomials $\varphi_k(z)$ with respect to $\mu(d\lambda)$ and their reversed
ones $\varphi_k^*(z)=z^k\overline{\varphi_k}(z^{-1})$ using the relations \cite{FIVE,SIMON1,SIMONfoot,SIMON5years,WATKINS}
\[
 X_{2x}(z)=z^{-x}\varphi_{2x}(z), \qquad X_{2x+1}(z)=z^{-x-1}\varphi_{2x+1}^*(z),
\]
which yield
\[
 \bs\mu_x(d\lambda) =
 \begin{pmatrix}
 |\varphi_{2x}(\lambda)|^2 & \lambda^{-2x-1}\varphi_{2x}^*(\lambda)\,\varphi_{2x+1}^*(\lambda)
 \\
 \lambda^{-2x}\varphi_{2x}(\lambda)\,\varphi_{2x+1}(\lambda) & |\varphi_{2x+1}(\lambda)|^2
 \end{pmatrix}
 \mu(d\lambda),
\]

The first step is to compute the Carath\'{e}odory function $\bs{F}_{\kern-2pt x}(z)$ of $\bs\mu_x(d\lambda)$,
\[
 \bs{F}_{\kern-2pt x}(z) = \begin{pmatrix} F_{2x}(z) & \tilde{G}_{2x}(z) \\ G_{2x}(z) & F_{2x+1}(z) \end{pmatrix},
\]
where $F_k(z)$ is the Carath\'{e}odory function of $|\varphi_k(z)|^2\mu(d\lambda)$ and
\[
\begin{aligned}
 & G_k(z) = \int \frac{\lambda+z}{\lambda-z} \, \lambda^{-k}\varphi_k(\lambda)\,\varphi_{k+1}(\lambda) \, \mu(d\lambda),
 \\
 & \tilde{G}_k(z) = \int \frac{\lambda+z}{\lambda-z} \, \lambda^{-k-1}\varphi_k^*(\lambda)\,\varphi_{k+1}^*(\lambda) \, \mu(d\lambda).
\end{aligned}
\]

Let $f(z)$ be a Schur function with Schur parameters $(\gamma_k)$. Khrushchev's formula states that \cite{KHRUSHCHEV}
\[
 F_k(z) = \frac{1+zf^{k-1}(z)f_k(z)}{1-zf^{k-1}(z)f_k(z)}.
\]
We will use the notation $F(z)=F_0(z)$, which is the Carath\'{e}odory function related to $f(z)$.

On the other hand, $G_k(z)$ and $\tilde{G}_k(z)$ can be expressed in terms of the second kind polynomials $\Omega_k(z)$, generated by the Verblunsky
parameters $(-\gamma_k)$, due to the relations \cite{JNT}
\[
\begin{aligned}
 & \int \frac{\lambda+z}{\lambda-z} \, \left[ \frac{z^j}{\lambda^j}\varphi_k(\lambda) - \varphi_k(z) \right]\mu(d\lambda) = \Omega_k(z),
 && \quad j=0,1,\dots,k-1,
 \\
 & \int \frac{\lambda+z}{\lambda-z} \, \left[ \frac{z^j}{\lambda^j}\varphi_k^*(\lambda) - \varphi_k^*(z) \right]\mu(d\lambda) = -\Omega_k^*(z),
 && \quad j=1,2,\dots,k.
\end{aligned}
\]
These relations lead to the equalities
\[
\begin{aligned}
 & G_k(z) = z^{-k}\varphi_k(z) \left[\varphi_{k+1}(z)F(z)+\Omega_{k+1}(z)\right],
 \\
 & \tilde{G}_k(z) = z^{-k-1}\varphi_k^*(z) \left[\varphi_{k+1}^*(z)F(z)-\Omega_{k+1}^*(z)\right].
\end{aligned}
\]

The Verblunsky parameters provide a forward and a backward recurrence relation involving the orthogonal polynomials $\varphi_k(z)$ and their reversed
ones \cite{JNT,SIMON1,SIMONfoot},
\[
\begin{aligned}
 & \rho_k\varphi_{k+1}(z) = z\varphi_k(z) - \overline\gamma_k\varphi_k^*(z),
 && \quad \rho_kz\varphi_k(z) = \varphi_{k+1}(z) + \overline\gamma_k\varphi_{k+1}^*(z),
 \\
 & \rho_k\varphi_{k+1}^*(z) = -\gamma_kz\varphi_k(z) + \varphi_k^*(z),
 && \quad \rho_k\varphi_k^*(z) = \gamma_k\varphi_{k+1}(z) + \varphi_{k+1}^*(z),
\end{aligned}
\]
with analogous formulas for $\Omega_k(z)$, replacing $\gamma_k$ by $-\gamma_k$. From these recurrence
relations we get by induction an expression for the iterates and inverse iterates of $f(z)$, namely
\[
 f_k(z) = \frac{1}{z} \frac{\varphi_k^*(z)F(z)-\Omega_k^*(z)}{\varphi_k(z)F(z)+\Omega_k(z)}, \qquad f^k(z)=\frac{\varphi_{k+1}(z)}{\varphi_{k+1}^*(z)}.
\]
Therefore, $G_k(z)$ and $\tilde{G}_k(z)$ are related by
\[
 \frac{\tilde{G}_k(z)}{G_k(z)} = \frac{f_{k+1}(z)}{f^{k-1}(z)}.
\]

A simple connection among the functions $F_k(z)$, $F_{k+1}$, $G_k(z)$ and $\tilde{G}_k(z)$ results from rewriting
\[
 F_k(z) = \int \frac{\lambda+z}{\lambda-z} \, \lambda^{-k}\varphi_k^*(z)\,\varphi_k(z)\,\mu(d\lambda).
\]
Using this expression together with the recurrence for the orthogonal polynomials yields
\[
 \rho_k (F_k(z) - F_{k+1}(z)) = \gamma_k G_k(z) + \overline\gamma_k \tilde{G}_k(z).
\]
Finally, the above identity combined with the relation between $G_k(z)$ and $\tilde{G}_k(z)$ leads to
\[
\begin{aligned}
 & G_k(z) = \rho_k f^{k-1}(z) \frac{F_k(z)-F_{k+1}(z)}{\gamma_kf^{k-1}(z)+\overline\gamma_kf_{k+1}(z)},
 \\
 & \tilde{G}_k(z) = \rho_k f^{k+1}(z) \frac{F_k(z)-F_{k+1}(z)}{\gamma_kf^{k-1}(z)+\overline\gamma_kf_{k+1}(z)}.
\end{aligned}
\]

Bearing in mind Khrushchev's formula, it is obvious that, as in the case of $F_k(z)$, we can express $G_k(z)$ and $\tilde{G}_k(z)$ in terms of
iterates and inverse iterates of $f(z)$. Indeed, such an expression can be simplified using the Schur algorithm to relate different iterates or
different inverse iterates, giving
\[
\begin{aligned}
 & G_k(z) = \frac{2\rho_k zf^{k-1}(z) }{(1-\gamma_kzf^{k-1}(z))(1-zf^k(z)f_{k+1}(z))},
 \\
 & \tilde{G}_k(z) = \frac{2\rho_k zf_{k+1}(z) }{(1-\gamma_kzf^{k-1}(z))(1-zf^k(z)f_{k+1}(z))}.
\end{aligned}
\]

We have rewritten the coefficients of the matrix Carath\'{e}odory function $\bs{F}_{\kern-2pt x}(z)$ in terms of iterates and inverse iterates of the
scalar Schur function $f(z)$. Now it only remains to perform a tedious calculation using again the Schur algorithm for these iterates to arrive at
\[
 \bs{f}_{\kern-2pt x}(z) = z^{-1} (\bs{F}_{\kern-2pt x}(z)-I_2)(\bs{F}_{\kern-2pt x}(z)+I_2)^{-1}
 = \begin{pmatrix} \gamma_{2x}f^{2x-1}(z) & \rho_{2x}f_{2x+1}(z) \\ \rho_{2x}f^{2x-1}(z) & -\overline\gamma_{2x}f_{2x+1}(z) \end{pmatrix}.
\]

In the case of a finite lattice of $N$ sites, the only difference from the semi-infinite lattice is that the measure $\mu(d\lambda)$ is supported on
$2N$ points and thus there exists only a finite segment of $2N$ orthogonal polynomials and Laurent polynomials. The only effect of this finiteness in
the previous arguments is that the Schur function $f(z)$ has a finite number of Schur parameters $(\gamma_0,\dots,\gamma_{2N-2},1)$, and similarly
for its iterates. Hence, the site Schur function has the same form as in the half-line case.

As for a doubly-infinite lattice, we can obtain the corresponding site Schur functions by a limiting process on semi-infinite lattices. Consider a
coined walk on the line with coins given by a doubly-infinite sequence of parameters $(\gamma_{2x})_{x\in\mathbb{Z}}$. The transition matrix in the
basis states is then a doubly-infinite CMV matrix with Verblunsky parameters $(\dots,0,\gamma_{-2},0,\gamma_0,0,\gamma_2,0,\dots)$.

If $U=SC$ is the corresponding unitary step, let us define for each site $x_0\in\mathbb{Z}$ a new unitary $U_{x_0} = I_{x_0}^- \oplus U_{x_0}^+$
where $I_{x_0}^-$ is the identity on the strict left subspace $\oplus_{x<x_0}V_x$, and $U_{x_0}^+=S_{x_0}^+C$ acts on the right subspace
$\oplus_{x\ge x_0}V_x$ with the same coin operator as $U$, but a minimally modified shift $S_{x_0}^+$ to impose a reflecting boundary condition from
the right at site $x_0$: $S_{x_0}^+|x_0,\da\>=|x_0,\ua\>$.
In other words, $U_{x_0}$ splits into the trivial evolution on the strict left subspace and a coined walk on a half-line -- the sites $x\ge x_0$ --
whose unitary step acts exactly as $U$ on the strict right subspace $\oplus_{x>x_0}V_x$, although not on the subspace $V_{x_0}$
itself. Clearly, $U_{x_0}$ converges strongly to $U$ as $x_0 \to -\infty$, i.e. $\lim_{x_0\to-\infty}(U-U_{x_0})\psi = 0$ for any state $\psi$.

Additionally, the analysis of the half-line shows that, under the $U_{x_0}$ evolution, the matrix Schur function of a site $x\ge x_0$ is given by
\[
 \bs{f}_{\kern-2pt x_0,x}(z) =
 \begin{pmatrix} \gamma_{2x}f^{2x-1}_{x_0}(z) & \rho_{2x}f_{2x+1}(z) \\ \rho_{2x}f^{2x-1}_{x_0}(z) & -\overline\gamma_{2x}f_{2x+1}(z) \end{pmatrix},
\]
with $f_k(z)$ and $f^k_{x_0}(z)$ given by the Schur parameters
\[
  (\gamma_k,\gamma_{k+1},\gamma_{k+2},\dots)
  \quad \text{and} \quad
  (-\overline\gamma_k,-\overline\gamma_{k-1},\dots,-\overline\gamma_{2x_0},1),
\]
respectively.

The strong convergence $U_{x_0} \to U$ implies the weak convergence of the spectral measures $\bs\mu_{x_0,x}(d\lambda)\to\bs\mu_x(d\lambda)$
corresponding to any site subspace $V_x$, which also implies the uniform convergence $\bs{f}_{\kern-2pt x_0,x}(z)\to\bs{f}_{\kern-2pt x}(z)$ on
compact subsets of the unit disk for the related Schur functions. Hence,
\[
 \bs{f}_{\kern-2pt x}(z) =
 \begin{pmatrix} \gamma_{2x}f^{2x-1}(z) & \rho_{2x}f_{2x+1}(z) \\ \rho_{2x}f^{2x-1}(z) & -\overline\gamma_{2x}f_{2x+1}(z) \end{pmatrix},
\]
with $f^{2x-1}(z) = \lim_{x_0\to-\infty} f^{2x-1}_{x_0}(z)$. As a limit of Schur functions under the topology of uniform convergence on compact sets,
$f^{2x-1}(z)$ is a Schur function too and its Schur parameters are the limits of the Schur parameters of $f^{2x-1}_{x_0}(z)$. Taking the limit
$x_0\to-\infty$ in the Schur parameters $(0,-\overline\gamma_{2x-2},0,-\overline\gamma_{2x-4},\dots,-\overline\gamma_{2x_0},1)$ of
$f^{2x-1}_{x_0}(z)$ shows that $f^{2x-1}(z)$ is characterized by the infinite sequence of Schur parameters
$(0,-\overline\gamma_{2x-2},0,-\overline\gamma_{2x-4},\dots)$.

\vskip20pt

\noindent{\bf Acknowledgements}

\medskip

J. Bourgain acknowledges support from the National Science Foundation through grants DMS-0808042 and DMS-0835373, and thanks the Mathematics
Department at UC Berkeley for their hospitality.

F.A. Gr\"{u}nbaum acknowledges support from the Applied Math. Sciences subprogram of the Office of Energy Research, US Department of Energy, under
Contract DE-AC03-76SF00098.

The work of L. Vel\'{a}zquez was partly supported by the research project MTM2011-28952-C02-01 from the Ministry of Science and Innovation of Spain and
the European Regional Development Fund (ERDF), and by Project E-64 of Diputaci\'on General de Arag\'on (Spain).

J. Wilkening was supported in part by the Director, Office of Science, Computational and Technology Research, U.S. Department of Energy, under
Contract No.~DE-AC02-05CH11231, and by the National Science Foundation through grant DMS-0955078.

\end{document}